\documentclass[preprint,12pt]{elsarticle}

\usepackage{color}
\usepackage{cite}
\usepackage{amsmath,amsfonts,amssymb,amsthm,bm,mathrsfs,mathtools,stmaryrd,units}
\usepackage{cleveref}
\usepackage{float}
\usepackage{graphicx}
\usepackage[table]{xcolor}
\usepackage{tikz}
\usepackage{pgfplots}
\usepackage{readarray}
\usepackage{multirow}
\usepackage[ruled,vlined]{algorithm2e}
\usepackage{enumerate}
\usepackage[nomessages]{fp}% http://ctan.org/pkg/fp
\usepackage{subcaption}

% plot
\usepackage{xparse}
\usetikzlibrary{3d}
\usetikzlibrary{calc}
\usepgfplotslibrary{groupplots,dateplot}
\pgfplotsset{compat=1.17} 
\tikzset{xyp/.style={canvas is xy plane at z=#1}}
\tikzset{xzp/.style={canvas is xz plane at y=#1}}
\tikzset{yzp/.style={canvas is yz plane at x=#1}}

% Externalize
\usetikzlibrary{external}
\tikzexternalize

% new environnement
%\newtheorem{lemma}{Lemma}
%\newtheorem{definition}{Definition}
%\newtheorem{proposition}{Proposition}
\newtheorem{Proposition}{Proposition}

% new cmds 
\newcommand*{\VEC}[1]  {\ensuremath{\boldsymbol{#1}}}
\newcommand*{\MAT}[1]  {\ensuremath{\boldsymbol{#1}}}

\DeclareDocumentCommand{\argmax}{O{}}{ \underset{#1}{\mathrm{argmax}} } % Argmaximum on some variable
\def\Hzero{\mathrm{H}_0} % H0 hypothesis
\def\Hone{\mathrm{H}_1} % H1 hypothesis
\newcommand\supinf{\underset{\Hzero}{\overset{\Hone}{\gtrless}}} % Great Less for Detectors
\DeclareMathOperator{\tr}{tr}
\DeclareMathOperator{\D}{D}
\DeclareMathOperator{\grad}{grad}

\DeclareMathOperator{\logm}{logm}

\DeclarePairedDelimiter\norm{\lVert}{\rVert}
% Swap the definition of \norm*, so that \norm resizes the size of the brackets, and the starred version does not.
\makeatletter
\let\oldnorm\norm
\def\norm{\@ifstar{\oldnorm}{\oldnorm*}}
\makeatother

\begin{document}

\begin{frontmatter}

\title{Online Change Detection in SAR Time-Series 
with Kronecker Product Structured Scaled Gaussian Models
}
%based on a Kronecker Structure and Scaled Gaussian Distributed Data

\author[label1]{Ammar Mian, Guillaume Ginolhac}
\author[label2]{Florent Bouchard}
\author[label3]{Arnaud Breloy}

\affiliation[label1]{organization={LISTIC, University Savoie Mont-Blanc},
            addressline={5 Chemin de bellevue},
            city={Annecy},
            postcode={74940},
            country={France}}

\affiliation[label2]{organization={L2S, CNRS, CentraleSupelec, University Paris-Saclay},
            addressline={3 Rue Joliot-Curie},
            city={Gif-sur-Yvette},
            postcode={91190},
            country={France}}

\affiliation[label2]{organization={LEME, University Paris Nanterre},
            addressline={50 Rue de Sevres},
            city={Ville-d'Avray},
            postcode={92410},
            country={France}}

\begin{abstract}
We develop the information geometry of scaled Gaussian distributions for which the covariance matrix exhibits a Kronecker product structure. 
This model and its geometry are then used to propose an online change detection (CD) algorithm for multivariate image times series (MITS). 
The proposed approach relies mainly on the online estimation of the structured covariance matrix under the null hypothesis, which is performed through a recursive (natural) Riemannian gradient descent.
This approach exhibits a practical interest compared to the corresponding offline version, as its computational cost remains constant for each new image added in the time series. 
Simulations show that the proposed recursive estimators reach the Intrinsic Cramér-Rao bound.
The interest of the proposed online CD approach is demonstrated on both simulated and real data. 
\end{abstract}

\begin{keyword}
Online covariance matrix estimation, Riemannian Geometry, Scaled Gaussian, Change Detection
\end{keyword}

\end{frontmatter}

\section{Introduction}
\label{sec:intro}

Synthetic Aperture Radar (SAR) Multivariate Image Times Series (MITS) have lately been made widely available thanks to various Earth monitoring missions such as Sentinel-1, TerraSAR-X, or UAVSAR. 
This modality offers significant advantages over the optical and multispectral ones, notably when the area of interest is observed at night, or covered by clouds. 
As for optical images, the pixel of SAR images can be multivariate when exploiting the polarimetric diversity, or the spectro-angular properties of the scatterers through a wavelet decomposition \citep{MianTGRS2019}.
Hence, the analysis of multivariate SAR-MITS has become an active topic of research over the past years, with numerous applications such as change detection and crop classification.

In this paper, we focus on change detection applications for SAR-MITS~\citep{Hussain2013}.
Notably, when the local patches of the images are used instead of single pixels, the detection algorithms are mainly based on the analysis of the statistical properties of the patch over the time.
In this framework, a popular approach is based on testing the change of the covariance matrix along the time series \citep{mian2021overview}. 
An appropriate matrix distance \citep{Bazi2005,Ratha2017,Nascimento2019} is then used to take a decision. 
For time series with more than two images, a prominent class of methods builds this distance upon the Generalized Likelihood Ratio Test (GLRT), usually constructed under the assumption that the multivariate pixels follow a Gaussian distribution \citep{Conradsen2003, Ciuonzo2017}.
In practice, the radar returns makes the data distribution heavy-tailed  \citep{billingsley93,Greco2006model}.
A popular family to model non Gaussian radar data is the Complex Elliptically Symmetric (CES) distributions framework \citep{ollila2012complex}, that encompasses a significant number of well-known distribution (e.g., Weibull, Student-$t$, and $K$-distribution). 
Scaled Gaussian (SG) distributions are a major sub-family of CES, that model each sample as Gaussian conditionally to a scale factor called the texture.
The choice of the texture distribution allows then for accurately modeling the tails of the distribution.
Conversely, assuming that the texture is unknown and deterministic for each sample offers a more robust model, as it is distribution-free \citep{Tyler1987, Pascal2008a}.
This particular model has been used in \citep{Mian2019TSP} to extend the classical GLRT of \citep{Conradsen2003, Ciuonzo2017}, which improved the detection performance when using high resolution SAR images.

The performance of the aforementioned change detection methods is generally improved when the input data reflects the spectro-angular diversity of the scatterers. 
Such diversity can be accounted for through an appropriate wavelet transformation \citep{MianTGRS2019}.
This pre-processing incidentally increases the pixel dimension $p$ compared to the native polarimetric SAR pixels (for which $p=3$).
As a result, the spatial resolution of covariance-based change detectors is reduced, as the tested local patches should include a number of pixel $n$ that scales with $p$.
This issue can be mitigated by assuming an additional structure for the covariance matrix and derive the corresponding change detection algorithm.
This approach was successfully leveraged by considering low-rank structures of the covariance matrix in \citep{abdallah2019detection, MianJSTARS2020}. 
A limitation regarding these methods is that the rank of the covariance matrix is an extra parameter that needs to be estimated.
Moreover, this rank often changes with respect to the studied pixel and the images which leads to complicated strategies for the change detection process. 
This motivates the use of alternate low-dimensional covariance matrix structures.

In this work, we remark that the transformation proposed in \citep{MianTGRS2019} combined with polarization channels naturally leads to have an inherent Kronecker product structure.
This structure is interesting since it does not involve any nuisance parameter: the dimensions of the sub-matrices are fixed by the number of polarization channels.
It also greatly reduces the number of parameters to be estimated, which allows for preserving a thin spatial resolution. 
The Kronecker product structure has been extensively leveraged in the radar community and for MIMO systems.
For example, several algorithms have been developed for the estimation of Kronecker product structured covariance matrices when the distribution is assumed to be Gaussian \citep{werner2008estimation,srivastava2008models}. 
Their extensions to Scaled Gaussian (SG) model, have been developed in \citep{wiesel2012geodesic, sun2016robust}. 
To the best of our knowledge, this structure was not considered in change detection for SAR-MITS.
Hence, building upon the aforementioned works, the first main contribution of this paper is thus to derive the GLRT for the change detection problem when the data follows a SG distribution with a Kronecker structure for the covariance matrix.

We then address important issues regarding the computational load of robust change detection methods.
Indeed, the computational cost of robust change detection \citep{Mian2019TSP, MianJSTARS2020} becomes prohibitive when the number of images $T$ increases, which limits their practical use for large SAR-MITS. 
The main computational bottleneck is the computation of a robust covariance matrix estimate, that needs to be re-evaluated under the null hypothesis for each new image.
This issue can be mitigated by replacing the estimate by one computed online (i.e., in a one pass streaming process).
In this scope, the Riemannian optimization framework \citep{absil2009optimization} offers an interesting lead since the covariance matrix is a parameter that belongs to a smooth manifold: the manifold of Hermitian Positive Definite (HPD) matrices.
Especially, a Riemannian gradient based online procedure proposed in \citep{zhou2019fast} allows for obtaining a statistically efficient estimate in a computationally efficient manner (as the gradient steps do not require any grid search).
This method has already been successfully leveraged for scaled Gaussian models in \citep{BMZSGB20} and {\it t}-distributed data in \citep{Bouchard2021}.
The second main contribution of this paper is to leverage this framework for reducing the computational load of the proposed change detection method.
This requires to study the information geometry of scaled Gaussian models with Kronecker product structured covariance matrices.
As a side result of this analysis, we also obtain the intrinsic Cramér-Rao bound~\citep{Smith2005, breloy2019intrinsic} for the corresponding estimation problem.
This bound is then used to validate the proposed online estimation method on simulated data.

Finally, we demonstrate the interest of the proposed approach on a real dataset provided by UAVSAR (courtesy of NASA/JPL-Caltech).
The SAR-MITS is referenced as SDelta\_28518\_02, Segment 1, which shows the evolution of a river delta in the USA with cycle of droughts and flood. 

The rest of the article is organized as follows: 
Section \ref{sec:model} presents the data model and the GLRT adapted to a Kronecker product structured covariance matrix. 
Section \ref{sec:geo} studies the information geometry of this model, and presents the derivation of the corresponding online estimation algorithm. 
Section \ref{sec:num_exp} illustrates the performance of the proposed method on simulated and real data.
Notations: italic indicates a scalar quantity, lower case boldface indicates a vector quantity, and upper case boldface a matrix. The transpose conjugate operator is $^H$ and the conjugate one is $^*$. $tr(\cdot)$ and $|\cdot|$ are respectively the trace and the determinant operators. 
$\mathcal{H}^{++}_{p}$ is the manifold of HPD matrices of size $p \times p$.  
For $\mathbf{x}\in \mathbb{C}^p$, the notation $\mathbf{x} \sim \mathcal{CN} (\boldsymbol{\nu}, \boldsymbol{\Sigma})$ stands for a complex-valued random Gaussian vector of mean $\boldsymbol{\nu} \in \mathbb{C}^p $ and covariance matrix $\boldsymbol{\Sigma}\in \mathcal{H}^{++}_{p}$.

\section{A framework for robust covariance matrix change detection}
\label{sec:model}

%This section presents .....

\subsection{Setup}

Within a SAR-MITS, we consider detecting a change within a local (spatial) patch of pixel.
Given a set of $T$ co-registered images, this means processing $T$ sets of $n$ pixels of dimension $p$, denoted $\{ \{\VEC{x}_i^{(t)}\}_{i\in\llbracket1,n\rrbracket} \}_{t\in\llbracket1,T\rrbracket} $.
The corresponding setup is displayed in Figure \ref{fig:Data description}. 
A change detection procedure can then be derived by: $i$) expressing a relevant statistical model and corresponding likelihood function for the dataset; $ii$) deriving a change detection test (such as the GLRT) to assess whether the parameters of the statistical model change over time, or not.
In the context of SAR-MITS, such procedures usually test for a change within the covariance matrix of the patch, as it is a meaningful feature for this type of data \citep{mian2021overview, Bazi2005, Ratha2017, Nascimento2019, Conradsen2003, Ciuonzo2017, Mian2019TSP}.

\begin{figure}[!t]
	\centering
	\newcommand\nmax{5} 
	\newcommand\mmax{5} 
	\newcommand{\omax}{3}
	\resizebox{\columnwidth}{!}{%	
	\begin{tikzpicture}[scale=0.7]
		\fill[xyp=\omax, gray!30] (1,1) rectangle (4,4); % Zone in gray corresponding to the window
		% Cube 
		\foreach \n in {1,...,\nmax}
		{   
			\foreach \o in {1,...,\omax}
			{
				\draw[xzp=\mmax] (\n-1,0) rectangle (\n,\o);
			}
			\foreach \m in {1,...,\mmax}
			{
				\draw[xyp=\omax] (\n-1,\m-1) rectangle (\n,\m);
				
			}
		}  
		\foreach \m in {1,...,\mmax}
		{
			\foreach \o in {1,...,\omax}
			{	
				\draw[yzp=\nmax] (\m-1,0) rectangle (\m,\o);
			}
		}
		% Arrow of dimensions
		\draw[xzp=\mmax] [<->] (-0.25,0) -- (-0.25,\omax);
		\draw[xzp=\mmax] (-0.6,1.5) node {$p$};
		\draw[xyp=\omax] (2.5,-0.5) node {$t=1$};
        % Name of each pixel
		\foreach \m in {1,2,3}
		{ 
			\foreach \n in {1,2,3}
			{ 
				\FPeval{\result}{clip(\n+(\m-1)*3)}
				\draw[xyp=\omax] (\n,\m) ++ (0.5,0.5) node {\small $\VEC{x}_\result^{(1)}$};
			}
		}
		\draw[xyp=\omax] (7,2.5) node {\ldots};
		\fill[xyp=\omax, gray!30] (9,1) rectangle (12,4); % Zone in gray corresponding to the window
		% Second Cube 
		\foreach \n in {1,...,\nmax}
		{   
			\FPeval{\nd}{clip(\n+8)}
			\foreach \o in {1,...,\omax}
			{
				\draw[xzp=\mmax] (\nd-1,0) rectangle (\nd,\o)  ;
			}
			\foreach \m in {1,...,\mmax}
			{
				\draw[xyp=\omax] (\nd-1,\m-1)  rectangle (\nd,\m)  ;
				
			}
		}  
		\foreach \m in {1,...,\mmax}
		{
			\foreach \o in {1,...,\omax}
			{	
				\FPeval{\result}{clip(\nmax+8)}
				\draw[yzp=\result] (\m-1,0)   rectangle (\m,\o);
			}
		}
        % Arrow of dimensions
		\draw[xzp=\mmax] [<->] (7.75,0) -- (7.75,\omax);
		\draw[xzp=\mmax] (7.4,1.5) node {$p$};
		\draw[xyp=\omax] (10.5,-0.5) node {$t=T$};
		
		% Name of each pixel
		\foreach \m in {1,2,3}
		{ 
			\foreach \n in {1,2,3}
			{ 
				\FPeval{\result}{clip(\n+(\m-1)*3)}
				\draw[xyp=\omax] (\n,\m) ++ (8.5,0.5) node {\small $\VEC{x}_\result^{(T)}$};
			}
		}
		\end{tikzpicture}
		}
		\vspace{-1em}
    \caption{Illustration data notation for change detection test in a MITS.}
    \label{fig:Data description}
	\end{figure}
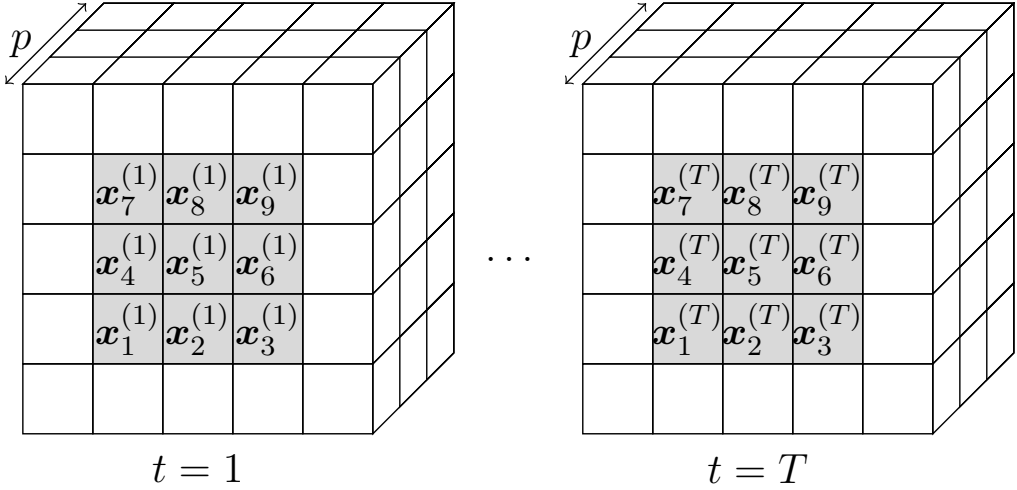	

\subsection{Robust statistical model for the pixel patches}

\label{sec:scaled_gaussian}

For a given time $t$ the pixel patch consists of $n$ i.i.d. samples $\{\VEC{x}_i^{(t)}\}_{i\in\llbracket1,n\rrbracket}$ where each sample $\VEC{x}_i^{(t)}\in\mathbb{C}^{p}$. 
Most work modeled this data as following a complex Gaussian distribution \citep{Conradsen2003, Ciuonzo2017}, i.e., that pixels are drawn from $\VEC{x}_i^{(t)}\sim \mathcal{CN}(\mathbf{0},\MAT{\Sigma}^{(t)} ) \,\, \forall i$, with $\MAT{\Sigma}^{(t)} \in \mathcal{H}_p^{++} $.
However, it was shown in \citep{Mian2019TSP} that the complex SG distribution 
offers a better fit to high resolution SAR images, and allows for improving the performance of change detection.
In this case each sample $\VEC{x}_i^{(t)}$ is modeled as complex Gaussian conditionally to an unknown scale, referred to as the texture.
The choice of the texture distribution allows then for modeling many well known heavy-tailed distributions \citep{ollila2012complex}.
Following from the robust estimation framework \citep{Tyler1987, Pascal2008a}, a distribution-free detection process will be obtained by assuming that the texture is unknown and deterministic for each sample.
Hence, we model samples as $\VEC{x}_i^{(t)} \sim \mathcal{CN}(\VEC{0}, \tau_i^{(t)}\boldsymbol{\Sigma}^{(t)})$ with $\tau_i^{(t)}\in \mathbb{R}_*^{+}, \forall~i\in [\![1,n]\!]$. 
The log-likelihood of the pixel patch at time $t$ is then, up to a constant,
\begin{equation}
\mathcal{L}(\{\VEC{x}_i^{(t)}\}_{i\in\llbracket1,n\rrbracket},\theta^{(t)}) 
=\sum_{i=1}^n \log|\tau_i^{(t)} \MAT{\Sigma}^{(t)}| + \frac{1}{\tau_i^{(t)}}\left.\VEC{x}_i^{(t)}\right.^H \left.\MAT{\Sigma}^{(t)}\right.^{-1}\VEC{x}_i
\label{eq:logpdf}
\end{equation}
where $\theta^{(t)}=\{\MAT{\Sigma}^{(t)}, \boldsymbol{\tau}^{(t)}\}$ is the set of unknown statistical parameters, in which $\boldsymbol{\tau}^{(t)} = [ \tau_1^{(t)},~\cdots,~\tau_n^{(t)}  ]$ denotes the vector of textures.
Though this will not be exploited in this work, we also notice that the Gaussian model can be recovered as a sub-case by simply setting $\tau_i^{(t)}=1, \forall~i\in [\![1,n]\!]$.

% The MITS is composed of $T$ sets of $\{\VEC{x}_i^{(t)}\}_{i\in\llbracket1,n\rrbracket}$ as shown in Fig. \ref{fig:Data description}. 

\subsection{Change detection with the GLRT}

Covariance based change detection operates by assuming that a change in the image is translated by a change in the unknown parameters $\theta^{(t)}$ over the time index $t$.
From the SAR-MITS $\{ \{\VEC{x}_i^{(t)}\}_{i\in\llbracket1,n\rrbracket} \}_{t\in\llbracket1,T\rrbracket} $, the change detection problem can thus be written as the following hypothesis test:
\begin{equation}
    \left\{
        \begin{array}{ll}
          \mathrm{H_0}: \theta^{(1)}=\theta^{(2)}=...=\theta^{(T)} \overset{\Delta}{=} \theta^{(0)} \vspace{0.1cm}\\
          \mathrm{H_1}: \exists (t,t')\in\llbracket 1,T\rrbracket^2,~ \theta^{(t)} \ne \theta^{(t^\prime)}\\
        \end{array}
    \right.
    \label{eq:ProblemDetection}
\end{equation}
For the SG model of Section \ref{sec:scaled_gaussian}, the GLRT corresponding to this hypothesis test has been derived in \citep{Mian2019TSP} and is expressed as:
\begin{equation}
	\hat{\Lambda}_{SG}  = \frac{\left|\hat{\mathbf{\Sigma}}^{(0)}\right|^{Tn}}{\displaystyle \prod_{t=1}^T \left| {\hat{\mathbf{\Sigma}}^{(t)}}\right|^n} \displaystyle\prod_{\substack{i =1}}^{\substack{n}}  \frac{ \left(\displaystyle   \hat{\tau}_{i}^{(0)} \right)^{Tp} }{\displaystyle \prod_{t=1}^T\left( \hat{\tau}_i^{(t)} \right)^{p}} \supinf \lambda ,
\label{eq:GLRTMatGen}
\end{equation}
where $\hat{\theta}^{(0)}=\{\hat{\MAT{\Sigma}}^{(0)}, \hat{\boldsymbol{\tau}}^{(0)}\}$
stands for the maximum likelihood estimator (MLE) of the parameters under $\mathrm{H_0}$, and 
$\hat{\theta}^{(t)}=\{\hat{\MAT{\Sigma}}^{(t)}, \hat{\boldsymbol{\tau}}^{(t)}\},~\forall t\in [\![1,T]\!]$ for the MLE of the parameters under $\mathrm{H_1}$.
These MLEs are expressed respectively as
\begin{eqnarray}
	\hat{\mathbf{\Sigma}}^{(0)}  = \frac{p}{n} \sum_{i =1}^{n}  \frac{\displaystyle \sum_{t=1}^T \VEC{x}_i^{(t)} \left.\VEC{x}_i^{(t)}\right.^H}{\displaystyle \sum_{t=1}^T \left.\VEC{x}_i^{(t)}\right.^H \left.\hat{\mathbf{\Sigma}}^{(0)}\right.^{-1} \VEC{x}_i^{(t)}}
	~~\text{and}~~
	\hat{\tau}_{i}^{(0)}=\sum_{t=1}^T\frac{\left.\VEC{x}_i^{(t)}\right.^H \left.\hat{\mathbf{\Sigma}}^{(0)}\right.^{-1}\VEC{x}_i^{(t)}}{Tp},
	\label{eq:defTylerNullHypo}
\end{eqnarray}
and
\begin{eqnarray}
	\hat{\mathbf{\Sigma}}^{(t)}  = \frac{p}{n} \sum_{i =1}^{n}  \frac{\displaystyle \VEC{x}_i^{(t)} \left.\VEC{x}_i^{(t)}\right.^H}{\displaystyle \left.\VEC{x}_i^{(t)}\right.^H \left.\hat{\mathbf{\Sigma}}^{(t)}\right.^{-1} \VEC{x}_i^{(t)}}
	~~\text{and}~~
	\hat{\tau}_{i}^{(t)}= \frac{\left.\VEC{x}_i^{(t)}\right.^H \left.\hat{\mathbf{\Sigma}}^{(t)}\right.^{-1}\VEC{x}_i^{(t)}}{p},
	\label{eq:defTylerH1Hypo}
\end{eqnarray}
with the latter corresponds to Tyler's $M$-estimator of the scatter matrix~\citep{Tyler1987,Pascal2008a}.
%This detector features interesting CFAR properties and exhibits better performances when data follow a SG distribution.

% Unfortunately, it suffers of a large complexity, in particular as $T$ grows.
% Moreover, when a new dataset $\{\VEC{x}_i^{(T+1)}\}$ occurs, it is impossible to compute the new detector $\hat{\Lambda}_{CG}^{(T+1)}$ directly from $\hat{\Lambda}_{CG}^{(T)}$ because:
% \begin{equation}
% 	\hat{\mathbf{\Sigma}}^{(T+1)}_{0} \neq \frac{T\hat{\mathbf{\Sigma}}^{(T)}_{0}+\hat{\mathbf{\Sigma}}_{Tyl}^{(T+1)}}{T+1}
% \end{equation}
% %
% A solution has been proposed in \citep{bouchard2020riemannian} to provide an online version of this previous detector by using Riemaniann tools. The interest has been shown on simulations and results of the online version reach the performance of the off-line version for a certain number of images depending of the size $p$.

\subsection{Some practical limitations}

\noindent
\textbf{Data dimension $p$} (pixel depth):
For SAR-MITS, the diversity in the data arises from polarization channels (HH, VV and HV/VH), and the spectro-angular diversity of the scatterers which can be obtained by using an appropriate wavelet transform \citep{MianTGRS2019}.
The latter transformation tends to improve the detection performance of the covariance based change detectors (such as $\hat{\Lambda}_{SG}$ in \eqref{eq:GLRTMatGen}), however it increases the pixel depth $p$.
It is consequently detrimental to the spatial resolution of the detectors, as the number of pixel in the patch $n$ needs to scale with $p$ to accurately estimate the local covariance matrix.
The assumption of i.i.d. data can also be violated when processing large areas, which motivates the development of methods that reduce the number of needed samples.
This reduction can be achieved by using regularization \citep{ollila2014regularized}, or low-rank structured covariance matrices \citep{MianJSTARS2020}.
Nevertheless these methods involve regularization parameters that need proper tuning, which increases the complexity of the detection process.
This issues will be addressed in Section \ref{sec:glrt_offline}, where we leverage a Kronecker product structure whose hyper-parameters are inherently set by the wavelet transform \citep{MianTGRS2019}.\\

\noindent
\textbf{Data dimension $T$} (length of the time series):
The detector $\hat{\Lambda}_{SG}$ in \eqref{eq:GLRTMatGen} involves solution of fixed point equations \eqref{eq:defTylerH1Hypo} and \eqref{eq:defTylerNullHypo}.
Especially, the fixed point in \eqref{eq:defTylerNullHypo} depends on the whole dataset and cannot be computed recursively (i.e., in a streaming fashion).
Whether a low-dimensional structure is used or not, this means that the computational load of the robust detectors increases heavily for each new added image in the stack, which can be limiting.
This issue is addressed in Section \ref{sec:geo}, where we leverage the information geometry of the considered statistical model in order to propose a statistically efficient on-line estimation process.

\section{Robust GLRT with Kronecker product structure}
\label{sec:glrt_offline}

The wavelet transform in \citep{MianTGRS2019} consists in combining the polarization channels to reflect the spectro-angular diversity of the data. 
This combination implies a Kronecker product structure of the covariance matrix, in which the size of one of the two matrices will be $3$ (number of polarization channels), while the dimension of the other will be equal to the number of frequency times the angular intervals that are considered.
Hence, the covariance matrix $\MAT{\Sigma}$ can be assumed to be structured as $\MAT{\Sigma} = \MAT{A} \otimes \MAT{B}$, with $ \MAT{A}\in \mathcal{H}_a^{++}$ and $\MAT{B}\in \mathcal{H}_b^{++} $, where $a=3$ and $b = p/3$\footnote{This remains valid whatever $a$ and $b$.}.
To avoid any scaling ambiguity on this decomposition and the scaling by the textures, an arbitrary normalization on $\MAT{A}$ and $\MAT{B}$ is needed. 
We rely on the unit determinant for both matrices $\MAT{A}$ and $\MAT{B}$, which, as noted in~\citep{Paindaveine2008,BMZSGB20}, is particularly advantageous from a geometrical point of view.
In this case, we have $\MAT{A}\in s\mathcal{H}^{++}_{a}=\{\MAT{M}\in\mathcal{H}^{++}_{a}:|\MAT{M}|=1\}$ and $\MAT{B}\in s\mathcal{H}^{++}_{b}=\{\MAT{M}\in\mathcal{H}^{++}_{b}:|\MAT{M}|=1\}$.
Following from the SG model of Section \ref{sec:model}, we have that each sample of a patch is distributed according to $\VEC{x}_i \sim \mathcal{CN}(\mathbf{0}, \tau_i\MAT{A}\otimes\MAT{B}$).
The likelihood of a pixel patch at time $t$ is then directly transposed from \eqref{eq:logpdf}, and also denoted 
$\mathcal{L}(\{\VEC{x}_i^{(t)}\}_{i\in\llbracket1,n\rrbracket},\theta^{(t)}) $, in which the parameter of interest $\theta^{(t)}$ now denotes $\theta^{(t)}=\{ \MAT{A}^{(t)},\MAT{B}^{(t)},\boldsymbol{\tau}^{(t)} \}$, which lies in the product manifold $\mathcal{M}=s\mathcal{H}^{++}_a\times s\mathcal{H}^{++}_b\times\mathbb{R}^{n}_{++}$.
The following proposition then adapts the GLRT of \eqref{eq:GLRTMatGen} to the case where the covariance matrix has such Kronecker product structure.

\begin{Proposition}
The GLRT for the problem of detection \eqref{eq:ProblemDetection} when $\theta^{(t)}=\{\MAT{A}^{(t)},\MAT{B}^{(t)},\VEC{\tau}^{(t)}\}$ and $T$ images is:\footnotesize
\begin{equation}
    \log ( \hat{\Lambda}_{K-SG}^{(T)} ) = 
    \mathcal{L}^{\Hone} ( \{ \hat{\theta}^{(t)} \}_{t=1}^T ) 
    -
    \mathcal{L}^{\Hzero} ( \hat{\theta}^{(0)} )
    %
    %\frac{\mathcal{L}^{\Hone}(\{\{\VEC{x}_i^{(t)}\}_{i=1}^n\}_{t=1}^T|\{\hat{\MAT{A}}_t\}_{t=1}^T,\{\hat{\MAT{B}}_{t}\}_{t=1}^T,\{\hat{\VEC{\tau}}^{(t)}\}_{t=1}^T)}{\mathcal{L}^{\Hzero}(\{\{\VEC{x}_i^{(t)}\}_{i=1}^n\}_{t=1}^T|\hat{\MAT{A}}_0,\hat{\MAT{B}}_0,\hat{\VEC{\tau}}_0)}
    \label{eq:GLRTKronCG}
\end{equation}\normalsize
where the log-likelihoods of the whole data under $\Hzero$ and $\Hone$ are expressed as:
\begin{equation}
    \begin{array}{rll}
         \mathcal{L}^{\Hzero}  ( {\theta}^{(0)} ) & = & -n T \log|\MAT{A}^{(0)}\otimes\MAT{B}^{(0)}|-Tp\sum_{i=1}^n \log(\tau_i^{(0)}) \vspace{0.2cm}\\
         & &-\sum_{i=1}^n \sum_{t=1}^T\frac{\left.\VEC{x}_i^{(t)}\right.^H(\MAT{A}^{(0)}\otimes\MAT{B}^{(0)})^{-1}\VEC{x}_i^{(t)}}{\tau_{i}^{(0)}}  \vspace{0.4cm} \\
         \mathcal{L}^{\Hone} ( \{ {\theta}^{(t)} \}_{t=1}^T )  & = & -n\sum_{t=1}^T \log|\MAT{A}^{(t)}\otimes\MAT{B}^{(t)}|-p\sum_{i=1}^n \sum_{t=1}^T\log(\tau_i^{(t)}) \vspace{0.2cm}\\
         & &-\sum_{i=1}^n \sum_{t=1}^T\frac{\left.\VEC{x}_i^{(t)}\right.^H(\MAT{A}^{(t)}\otimes\MAT{B}^{(t)})^{-1}\VEC{x}_i^{(t)}}{\tau_i^{(t)}},
    \end{array}
    \label{eq:logH0H1}
\end{equation}
and where $\hat{\theta}^{(0)}=\{\hat{\MAT{A}}^{(0)}, \hat{\MAT{B}}^{(0)}, \hat{\boldsymbol{\tau}}^{(0)}\}$
stands for the maximum likelihood estimator (MLE) of the parameters under $\mathrm{H_0}$, and 
$\hat{\theta}^{(t)}=\{\hat{\MAT{A}}^{(t)}, \hat{\MAT{B}}^{(t)}, \hat{\boldsymbol{\tau}}^{(t)}\},~\forall t\in [\![1,T]\!]$ for the MLE of the parameters under $\mathrm{H_1}$.
These MLEs are expressed respectively as
\begin{equation} \footnotesize
    \begin{array}{l}
         \hat{\MAT{A}}^{(0)} = \frac{1}{T n b} \left( \sum_{t=1}^T \sum_{i=1}^n \frac{\MAT{M}_{i,t}^T(\hat{\MAT{B}}^{(0)})^{-1*}\MAT{M}_{i,t}^{*}}{\hat{\tau}_{i}^{(0)}}\right) \vspace{0.2cm} \\
         \hat{\MAT{B}}^{(0)} = \frac{1}{T n a} \left( \sum_{t=1}^T \sum_{i=1}^n \frac{\MAT{M}_{i,t}(\hat{\MAT{A}}^{(0)})^{-1*}\MAT{M}_{i,t}^{H}}{\hat{\tau}_{i}^{(0)}}\right) 
    \end{array}
    ~\textbf{and}~
    \hat{\tau}_{i}^{(0)} =\sum_{t=1}^T \frac{\left.\VEC{x}_i^{(t)}\right.^H(\hat{\MAT{A}}^{(0)}\otimes\hat{\MAT{B}}^{(0)})^{-1}\VEC{x}_i^{(t)}}{T p}
    \label{eq:MLE_H0}
\end{equation}
and
\begin{equation} \footnotesize
    \begin{array}{l}
         \hat{\MAT{A}}^{(t)} = \frac{1}{n b} \left( \sum_{i=1}^n \frac{\MAT{M}_{i,t}^T(\hat{\MAT{B}}^{(t)})^{-1*}\MAT{M}_{i,t}^{*}}{\hat{\tau}_{i}^{(t)}}\right) \vspace{0.2cm} \\
         \hat{\MAT{B}}^{(t)} = \frac{1}{n a} \left( \sum_{i=1}^n \frac{\MAT{M}_{i,t}(\hat{\MAT{A}}^{(t)})^{-1*}\MAT{M}_{i,t}^{H}}{\hat{\tau}_{i}^{(t)}}\right) 
    \end{array}
    ~\textbf{and}~
    \hat{\tau}_{i}^{(t)} = \frac{\left.\VEC{x}_i^{(t)}\right.^H(\hat{\MAT{A}}^{(t)}\otimes\hat{\MAT{B}}^{(t)})^{-1}\VEC{x}_i^{(t)}}{p}
    \label{eq:MLE_H1}
\end{equation}
where $\MAT{M}_{i,t}\in \mathbb{C}^{a \times b}$ is matrix 
storing the entries of $\VEC{x}_i^{(t)}$, such that ${\rm vec}(\MAT{M}_{i,t}) =\VEC{x}_i^{(t)}$. All these MLE estimators in \eqref{eq:MLE_H0} and \eqref{eq:MLE_H1} can be evaluated with a fixed-point algorithm, such as the one proposed in \citep{sun2016robust}.
\end{Proposition}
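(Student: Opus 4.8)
The plan is to obtain \eqref{eq:GLRTKronCG}--\eqref{eq:MLE_H1} in three stages: (i) specialize the generic scaled-Gaussian log-likelihood \eqref{eq:logpdf} to the Kronecker-structured case, (ii) maximize it separately under each hypothesis to identify the MLEs, and (iii) assemble the ratio. Stage (iii) is immediate: by definition of the GLRT, $\log(\hat{\Lambda}_{K\text{-}SG}^{(T)})$ is the difference between the supremum of the log-likelihood over the free per-date parameters under $\Hone$ and its supremum over the tied parameters under $\Hzero$, which is exactly \eqref{eq:GLRTKronCG} once the maximizers are substituted. The real content therefore lies in stages (i) and (ii).

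First I would plug $\MAT{\Sigma}=\MAT{A}\otimes\MAT{B}$ into \eqref{eq:logpdf}. Using the determinant identity $|\MAT{A}\otimes\MAT{B}|=|\MAT{A}|^{b}|\MAT{B}|^{a}$ (with $p=ab$), which on the constraint set $|\MAT{A}|=|\MAT{B}|=1$ simply equals one, the log-determinant term is handled directly; summing over the samples (and over dates with tied parameters under $\Hzero$, per date under $\Hone$) yields the two log-likelihoods of \eqref{eq:logH0H1}. The key rewriting is the vectorization identity $\VEC{x}^{H}(\MAT{A}\otimes\MAT{B})^{-1}\VEC{x}=\tr(\MAT{A}^{-1}\MAT{M}^{H}\MAT{B}^{-1}\MAT{M})$, where $\MAT{M}$ is the matricization of $\VEC{x}$ with $\vecc(\MAT{M})=\VEC{x}$ (up to the transpose/conjugation convention fixed by that identity). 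This converts each Mahalanobis term into a trace that is jointly bilinear in $\MAT{A}^{-1}$ and $\MAT{B}^{-1}$, which is what makes the $\MAT{A}$- and $\MAT{B}$-updates tractable.

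For the textures, $\tau_{i}$ enters only through $-p\log\tau_{i}-\tau_{i}^{-1}\VEC{x}_{i}^{H}(\MAT{A}\otimes\MAT{B})^{-1}\VEC{x}_{i}$ (summed over dates under $\Hzero$), so a scalar derivative gives the closed forms $\hat{\tau}_{i}^{(0)}$ and $\hat{\tau}_{i}^{(t)}$ at once. The harder part is $\MAT{A}$ and $\MAT{B}$, which I would obtain by differentiating the trace form with respect to each matrix while holding the other two blocks fixed and enforcing $|\MAT{A}|=|\MAT{B}|=1$ through a Lagrange multiplier. Using $\partial\log|\MAT{A}|\propto\MAT{A}^{-1}$ and $\partial\,\tr(\MAT{A}^{-1}\MAT{K})\propto-\MAT{A}^{-1}\MAT{K}\MAT{A}^{-1}$, the stationarity condition rearranges (pre- and post-multiplying by $\MAT{A}$) into $\MAT{A}\propto\sum_{i}\tau_{i}^{-1}\MAT{M}_{i}^{H}\MAT{B}^{-1}\MAT{M}_{i}$, and symmetrically $\MAT{B}\propto\sum_{i}\tau_{i}^{-1}\MAT{M}_{i}\MAT{A}^{-1}\MAT{M}_{i}^{H}$. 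To pin down the proportionality constant I would contract the $\MAT{A}$-equation with $\MAT{A}^{-1}$, giving $a$ on the left, and insert the texture stationarity relation $\tau_{i}^{-1}\tr(\MAT{A}^{-1}\MAT{M}_{i}^{H}\MAT{B}^{-1}\MAT{M}_{i})=p$ on the right; this forces the multiplier to vanish and fixes the constant to $1/(nb)$ (respectively $1/(na)$), recovering exactly the fixed-point systems \eqref{eq:MLE_H0}--\eqref{eq:MLE_H1}.

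The main obstacle is that the $\MAT{A}$- and $\MAT{B}$-equations are coupled and implicit, so they are not closed-form maximizers but a fixed-point system, and the unit-determinant constraints interact with the global scaling invariance $(\MAT{A},\MAT{B},\boldsymbol{\tau})\mapsto(\MAT{A}/c_{1},\MAT{B}/c_{2},c_{1}c_{2}\boldsymbol{\tau})$ of the likelihood. I would argue that this invariance is precisely what $|\MAT{A}|=|\MAT{B}|=1$ resolves, so a stationary point is determined only up to the normalization the constraints fix, and defer existence, uniqueness, and the numerical computation by alternating normalized fixed-point iterations to \citep{sun2016robust}. Verifying that the constant truly collapses to $1/(nb)$ and $1/(na)$---that is, that the Lagrange multiplier is zero---is the one computation I would carry out in full, since it is what distinguishes the stated normalized estimators from a generic Tyler-type fixed point.
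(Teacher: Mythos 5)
Your proposal is correct and follows essentially the same route as the paper: the GLRT is written as the difference of the maximized log-likelihoods, and the fixed-point MLEs are obtained by cancelling the matrix derivatives of the trace and log-determinant terms using the vectorization identity $\VEC{x}^H(\MAT{A}\otimes\MAT{B})^{-1}\VEC{x}=\tr(\MAT{A}^{-T}\MAT{M}^H\MAT{B}^{-1}\MAT{M})$ together with $\log|\MAT{A}\otimes\MAT{B}|=b\log|\MAT{A}|+a\log|\MAT{B}|$. The only substantive difference is that you enforce $|\MAT{A}|=|\MAT{B}|=1$ explicitly via Lagrange multipliers and verify that they vanish (by contracting the stationarity equation with $\MAT{A}^{-1}$ and inserting $\tau_i^{-1}\tr(\MAT{A}^{-1}\MAT{M}_i^{T}\MAT{B}^{-*}\MAT{M}_i^{*})=p$, which yields $np/(nb)=a$), whereas the paper differentiates the unconstrained likelihood and leaves the scaling ambiguity implicit; your extra step is sound and recovers the same normalization constants $1/(nb)$, $1/(na)$, $1/(Tnb)$, $1/(Tna)$.
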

\begin{proof}
The principle of the GLRT consists in computing the quantity: 
\begin{equation}
\log (\hat{\Lambda}_{K-SG}^{(T)})  = 
\underset{\{ {\theta}^{(t)} \}_{t=1}^T }{\mathrm{max}} \quad
\mathcal{L}^{\Hone} ( \{ {\theta}^{(t)} \}_{t=1}^T ) 
-
\underset{  {\theta}^{(0)}   }{\mathrm{max}} \quad
\mathcal{L}^{\Hzero} (  {\theta}^{(t)}   ) .
%
%    \hat{\Lambda}_{K-SG}^{(T)}  =  \frac{\underset{\{\MAT{A}^{(t)}\}_t,\{\MAT{B}^{(t)}\}_t,\{\VEC{\tau}^{(t)}\}_t}{\mathrm{argmax}} \mathcal{L}^{\Hone}(\{\VEC{x}_i^{(t)}\}_{i,t}|\{\MAT{A}^{(t)}\}_t,\{\MAT{B}^{(t)}\}_t,\{\VEC{\tau}^{(t)}\}_t)}{\underset{\MAT{A}_0,\MAT{B}_0,\VEC{\tau}_0}{\mathrm{argmax}}\mathcal{L}^{\Hzero}(\{\VEC{x}_i^{(t)}\}_{i,t}|\MAT{A}_0,\MAT{B}_0,\VEC{\tau}_0))}
\end{equation}
Obtaining the maximal values in the above expression then boils down to computing the MLE under $\Hzero$ and $\Hone$.
This requires canceling the derivative of $\mathcal{L}^{\Hzero} $
(resp. $\mathcal{L}^{\Hone}  $)
with respect to $\MAT{A}^{(0)}$, $\MAT{B}^{(0)}$, and $\VEC{\tau}^{(0)}$ (resp. $\MAT{A}^{(t)}$, $\MAT{B}^{(t)}$, and $\VEC{\tau}^{(t)}$ $\forall t$).
The derivatives of the trace terms are obtained by using $\VEC{x}^H (\MAT{A}\otimes \MAT{B})^{-1} \VEC{x} = \tr(\MAT{A}^{-T}\MAT{M}^H\MAT{B}^{-1}\MAT{M})$ and the following derivative:
\begin{equation}
    \frac{\partial \tr(\MAT{O}\MAT{X}^{-1}\MAT{P})}{\partial \MAT{X}} = -(\MAT{X}^{-1}\MAT{P}\MAT{O}\MAT{X}^{-1})^H ,
\end{equation}
which yields the following results:
\begin{equation}
    \begin{array}{lll}
      \frac{\partial \VEC{x}^H (\MAT{A}\otimes \MAT{B})^{-1} \VEC{x}}{\partial \MAT{A}}   &  = & -\MAT{A}^{-1}\MAT{M}^T(\MAT{B}^{-1})^{*}\MAT{M}^{*}\MAT{A}^{-1} \vspace{0.2cm}\\
       \frac{\partial \VEC{x}^H (\MAT{A}\otimes \MAT{B})^{-1} \VEC{x}}{\partial \MAT{B}}   &  = & -\MAT{B}^{-1}\MAT{M}(\MAT{A}^{-1})^{*}\MAT{M}^{H}\MAT{B}^{-1} .
    \end{array}
\end{equation}
The derivatives of the log terms are obtained by using the relation $\log |\MAT{A}\otimes \MAT{B}| = b\log |\MAT{A}|+a\log |\MAT{B}|$, and $\frac{\partial \log |\MAT{X}|}{\partial \MAT{X}}=\MAT{X}^{-H}$.
After some matrix manipulations, we obtain the final results in \eqref{eq:MLE_H0} and \eqref{eq:MLE_H1}. 
%
%\textcolor{red}{+ faut une remarque sur le fait qu'il y a pas besoin de normaliser les fixed point par le det ?}
\end{proof}

\noindent
\textbf{Remark}: the expression of the MLEs under $\Hone$ coincide with Tyler's $M$-estimator with Kronecker product structure obtained in \citep{sun2016robust} (slightly modified, as we deal with complex valued matrices).
Those under $\Hzero$ are slightly different due to the assumed shared parameters between the different time index, and thus involve a summation over $t$.\\

Similarly to the unstructured GLRT in \eqref{eq:GLRTMatGen}, the detector \eqref{eq:GLRTKronCG} cannot be recursively derived when a new patch $\mathcal{X}_{T+1}=\{\VEC{x}_i^{(T+1)}\}_{i=1}^n$ is added to the image stack. 
Indeed, if we denote the observation sets $\mathcal{X}_{1:T} =  \{ \{\VEC{x}_i^{(t)}\}_{i=1}^n \}_{t=1}^{T} $ and $\mathcal{X}_{1:T+1} =  \{ \{\VEC{x}_i^{(T+1} \}_{i=1}^n \}_{t=1}^{T+1} $, it is easy to notice for the MLE of $\MAT{A}$ under $\Hzero$ that: 
\begin{equation}
	\hat{\MAT{A}}^{(0)} ( \mathcal{X}_{1:T+1}    )  \neq \frac{T}{T+1} \hat{\MAT{A}}^{(0)} ( \mathcal{X}_{1:T}   ) + \frac{1}{T+1}\hat{\MAT{A}}^{(0)} (  \mathcal{X}_{T+1}    )
\end{equation}
due to the intricacy of the fixed-point solutions (the same is also true for $\MAT{B}$ and so on $\VEC{\tau}$).
When processing large time series, re-computing the fixed point can then be computationally demanding for large dimensions $p$ and $T$, which limits the practical implementation of the GLRTs.
To overcome this issue, we propose in the next section to recursively update the parameters $\theta^{(0)}$, which allows for computing an online version of the detector \eqref{eq:GLRTKronCG}. 
Such recursive algorithm, inspired by \citep{BMZSGB20}, will be obtained by leveraging the information geometry of the manifold $\mathcal{M}$ induced by the SG distribution.

\section{Online Robust GLRT with Kronecker product structure}
\label{sec:geo}
In this section, we first derive the Riemannian geometry of $\mathcal{M}$ equipped with the Fisher information metric induced by the likelihood~\eqref{eq:logpdf}. 
The exponential mapping which is needed in any Riemannian gradient descent algorithm is also given as well as the corresponding geodesic distance. 
The second part of this section is devoted to the online estimation of the parameters $\theta^{(0)}$ with a stochastic Riemannian gradient descent algorithm.

\subsection{Information geometry of $\mathcal{M}$} 

In this subsection, we will omit the subscript $.^{(t)}$ since the results given in this section will be valid for any values of $T$. By definition, the tangent space of $\mathcal{M}$ at $\theta$ is $T_{\theta} \mathcal{M} = T_{\MAT{A}}s\mathcal{H}^{++}_{a} \times T_{\MAT{B}}s\mathcal{H}^{++}_{b} \times T_{\VEC{\tau}}\mathbb{R}_{++}^n$.
The tangent space $T_{\VEC{\tau}}\mathbb{R}_{++}^n$ is identified to $\mathbb{R}^n$ and the tangent spaces $T_{\MAT{A}}\mathcal{H}^{++}_{a}$ and $T_{\MAT{B}}\mathcal{H}^{++}_{b}$ can be identified by \citep{breloy2019intrinsic}
\begin{equation}
\begin{array}{lll}
     T_{\MAT{A}}s\mathcal{H}^{++}_a & = &  \{ \MAT{\xi}_{\MAT{A}} \in \mathcal{H}_a: \, \tr(\MAT{A}^{-1}\MAT{\xi}_{\MAT{A}}) = 0\}.\\
     T_{\MAT{B}}s\mathcal{H}^{++}_b & = &  \{ \MAT{\xi}_{\MAT{B}} \in \mathcal{H}_b: \, \tr(\MAT{B}^{-1}\MAT{\xi}_{\MAT{B}}) = 0\}.\\
\end{array}
	\label{eq:TangentSpace}
\end{equation} 
In the following, $\xi = (\MAT{\xi}_{\MAT{A}}, \MAT{\xi}_{\MAT{B}},\MAT{\xi}_{\VEC{\tau}})$ and $\eta = (\MAT{\eta}_{\MAT{A}}, \MAT{\eta}_{\MAT{B}},\MAT{\eta}_{\VEC{\tau}})$ denote two elements from $T_{\theta} \mathcal{M}$.
The Fisher information metric on $\mathcal{M}$ induced by the scaled Gaussian distribution is provided in Proposition~\ref{prop:fim}.

\begin{Proposition}[Fisher information metric]
	\label{prop:fim}
	Given $\theta\in\mathcal{M}$, $\xi$ and $\eta\in T_{\theta}\mathcal{M}$, the Fisher information metric on $\mathcal{M}$ induced by the likelihood~\eqref{eq:logpdf} is
	\begin{multline*}
		\langle \xi, \eta\rangle_{\theta} =
		\frac{b}{p} \tr(\MAT{A}^{-1}\MAT{\xi}_{\MAT{A}}\MAT{A}^{-1}\MAT{\eta}_{\MAT{A}})
		+ \frac{a}{p} \tr(\MAT{B}^{-1}\MAT{\xi}_{\MAT{B}}\MAT{B}^{-1}\MAT{\eta}_{\MAT{B}})
		\\
		+ \frac{1}{n}(\MAT{\xi}_{\VEC{\tau}}\odot \VEC{\tau}^{\odot^{-1}})^T(\MAT{\eta}_{\VEC{\tau}}\odot \VEC{\tau}^{\odot^{-1}})
	\end{multline*}
\end{Proposition}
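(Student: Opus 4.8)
The plan is to obtain $\langle\cdot,\cdot\rangle_{\theta}$ as the suitably normalized expected Hessian of the negative log-likelihood, namely $\langle\xi,\eta\rangle_{\theta}=\tfrac{1}{np}\,\mathbb{E}[\mathrm{D}^2\mathcal{L}(\theta)[\xi,\eta]]$, with $\mathcal{L}$ the per-patch likelihood transposed from \eqref{eq:logpdf}. Since $\mathcal{L}$ is quadratic in the data $\VEC{x}_i$, its Hessian is at most linear in the outer products $\VEC{x}_i\VEC{x}_i^H$, so only the second moment $\mathbb{E}[\VEC{x}_i\VEC{x}_i^H]=\tau_i\,\MAT{A}\otimes\MAT{B}$ will be needed and no fourth-order (Wick) moments. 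First I would isolate the shape part by treating the full covariance $\MAT{\Sigma}=\MAT{A}\otimes\MAT{B}$ as a single Hermitian variable and establishing the intermediate identity $\mathbb{E}[\mathrm{D}^2_{\MAT{\Sigma}}\mathcal{L}[\Delta_1,\Delta_2]]=\tr(\MAT{\Sigma}^{-1}\Delta_1\MAT{\Sigma}^{-1}\Delta_2)$ for Hermitian directions $\Delta_1,\Delta_2$; this results from combining the $-\tr(\MAT{\Sigma}^{-1}\Delta_1\MAT{\Sigma}^{-1}\Delta_2)$ produced by $\log|\MAT{\Sigma}|$ with the $+2\,\tr(\MAT{\Sigma}^{-1}\Delta_1\MAT{\Sigma}^{-1}\Delta_2)$ produced by the quadratic term after inserting $\mathbb{E}[\VEC{x}\VEC{x}^H]=\tau\MAT{\Sigma}$.

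Next I would push this through the bilinear Kronecker map $(\MAT{A},\MAT{B})\mapsto\MAT{A}\otimes\MAT{B}$, whose first differential is $\delta\MAT{\Sigma}[\xi]=\MAT{\xi}_{\MAT{A}}\otimes\MAT{B}+\MAT{A}\otimes\MAT{\xi}_{\MAT{B}}$. Substituting the $\MAT{A}$-directions $\MAT{\xi}_{\MAT{A}}\otimes\MAT{B}$ and using $(\MAT{A}\otimes\MAT{B})^{-1}=\MAT{A}^{-1}\otimes\MAT{B}^{-1}$ together with $\tr((\MAT{P}_1\otimes\MAT{Q}_1)(\MAT{P}_2\otimes\MAT{Q}_2))=\tr(\MAT{P}_1\MAT{P}_2)\,\tr(\MAT{Q}_1\MAT{Q}_2)$ gives the $\MAT{A}$-block $b\,\tr(\MAT{A}^{-1}\MAT{\xi}_{\MAT{A}}\MAT{A}^{-1}\MAT{\eta}_{\MAT{A}})$ (the $\MAT{B}$-factors collapsing to $\tr(\MAT{I}_b)=b$), and symmetrically the $\MAT{B}$-block $a\,\tr(\MAT{B}^{-1}\MAT{\xi}_{\MAT{B}}\MAT{B}^{-1}\MAT{\eta}_{\MAT{B}})$. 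The mixed $\MAT{A}$--$\MAT{B}$ contribution reduces to $\tr(\MAT{A}^{-1}\MAT{\xi}_{\MAT{A}})\,\tr(\MAT{B}^{-1}\MAT{\eta}_{\MAT{B}})$, which vanishes identically because $\xi,\eta\in T_{\theta}\mathcal{M}$ satisfy the determinant constraints $\tr(\MAT{A}^{-1}\MAT{\xi}_{\MAT{A}})=\tr(\MAT{B}^{-1}\MAT{\xi}_{\MAT{B}})=0$ from \eqref{eq:TangentSpace}. I must also account for the fact that the Kronecker map is only bilinear, so its second differential injects an extra term $\mathrm{D}_{\MAT{\Sigma}}\mathcal{L}[\MAT{\xi}_{\MAT{A}}\otimes\MAT{\eta}_{\MAT{B}}+\MAT{\eta}_{\MAT{A}}\otimes\MAT{\xi}_{\MAT{B}}]$; this drops out in expectation because the expected score vanishes, $\mathbb{E}[\mathrm{D}_{\MAT{\Sigma}}\mathcal{L}[\Delta]]=\tr(\MAT{\Sigma}^{-1}\Delta)-\tr(\MAT{\Sigma}^{-1}\Delta)=0$.

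It remains to treat the textures and the cross blocks. For a single $\tau_i$, a direct scalar computation on $p\log\tau_i+\tau_i^{-1}\VEC{x}_i^H\MAT{\Sigma}^{-1}\VEC{x}_i$ gives $\mathbb{E}[\partial^2_{\tau_i}\mathcal{L}]=p/\tau_i^2$ after using $\mathbb{E}[\VEC{x}_i^H\MAT{\Sigma}^{-1}\VEC{x}_i]=\tau_i p$; summing over $i$ and identifying $\MAT{\xi}_{\VEC{\tau}}\odot\VEC{\tau}^{\odot^{-1}}$ reproduces the stated texture term. The texture--shape cross blocks reduce to multiples of $\tr(\MAT{A}^{-1}\MAT{\xi}_{\MAT{A}})$ and $\tr(\MAT{B}^{-1}\MAT{\xi}_{\MAT{B}})$, and again vanish by the same constraints, so the metric is block diagonal. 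Finally, collecting the $n$ samples and applying the normalization $1/(np)$ turns $nb\,\tr(\cdots)$, $na\,\tr(\cdots)$ and $\sum_i p/\tau_i^2$ into the coefficients $b/p$, $a/p$ and $1/n$ announced in the statement. The main obstacle I anticipate is bookkeeping rather than conceptual: keeping the complex-Hermitian directional derivatives consistent so that the $-1$ from the log-determinant and the $+2$ from the quadratic term combine correctly, and disposing of the Kronecker second-differential term via the zero-score identity; the determinant constraints of \eqref{eq:TangentSpace} are what ultimately guarantee the clean block-diagonal form.
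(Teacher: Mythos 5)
Your proposal is correct, and it lands on exactly the same endgame as the paper (Kronecker trace algebra plus the determinant constraints of \eqref{eq:TangentSpace} killing the cross blocks), but the front end is genuinely different. The paper does not compute an expected Hessian at all: it invokes a pullback lemma (Proposition~7 of the cited bouchard--breloy reference) stating that the Fisher metric of the reparametrized model is the pullback of the Gaussian Fisher metric $\tr(\MAT{\Sigma}^{-1}\MAT{\Sigma}_1\MAT{\Sigma}^{-1}\MAT{\Sigma}_2)$ through the \emph{first} differential of the joint map $\varphi_i(\theta)=\tau_i\MAT{A}\otimes\MAT{B}$; the texture then enters as the third summand $\MAT{\xi}_{\tau_i}\MAT{A}\otimes\MAT{B}$ of $\D\varphi_i(\theta)[\xi]$ and its block and cross terms fall out of the same single trace expansion. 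You instead rebuild that pullback property from the definition of the FIM as $\tfrac{1}{np}\mathbb{E}[\D^2\mathcal{L}]$: you re-derive the Gaussian metric via the $-1$ from the log-determinant plus the $+2$ from the quadratic term, you correctly notice that the bilinear Kronecker map contributes a second-differential term $\D\mathcal{L}[\MAT{\xi}_{\MAT{A}}\otimes\MAT{\eta}_{\MAT{B}}+\MAT{\eta}_{\MAT{A}}\otimes\MAT{\xi}_{\MAT{B}}]$ that dies in expectation by the zero-score identity, and you handle $\VEC{\tau}$ by a separate scalar computation. What your route buys is self-containedness (no external proposition, and, as you note, no fourth-order moments since the expected-Hessian form is linear in $\VEC{x}_i\VEC{x}_i^H$); what it costs is the extra bookkeeping of the chain rule at second order, which the cited lemma packages away. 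A minor point in your favor: you make the $1/(np)$ normalization explicit, whereas the paper applies it silently between the sum over $i$ and the ``announced result.'' Your coefficients $b/p$, $a/p$, $1/n$ all check out.
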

\begin{proof}
    We first introduce the mapping $\varphi_i:\mathcal{M}\to\mathcal{H}^{++}_p$ as:
    \begin{equation}
        \varphi_i(\theta^{(t)}) = \tau_i^{(t)}\MAT{A}^{(t)}\otimes\MAT{B}^{(t)},
    \end{equation}
    such that $\VEC{x}_i^{(t)} \sim \mathcal{CN}(\mathbf{0}, \varphi_i(\theta^{(t)}))$.
	From~\citep[Proposition~7]{bouchard2020riemannian}, we have
	\begin{equation*}
		\langle\xi,\eta\rangle_{\theta} = \langle\D\varphi_i(\theta)[\xi],\D\varphi(\theta)[\eta]\rangle_{\varphi(\theta)}^{\mathcal{H}^{++}_p},
	\end{equation*}
	where $\langle\cdot,\cdot\rangle^{\mathcal{H}^{++}_p}_{\cdot}$ is the Fisher metric of the Gaussian distribution on $\mathcal{H}^{++}_p$.
    This metric is, e.g., obtained in \citep{Smith2005, breloy2019intrinsic}, and is defined by:
	\begin{equation}
	    \langle\MAT{\Sigma}_1,\MAT{\Sigma}_2\rangle^{\mathcal{H}^{++}_p}_{\MAT{\Sigma}} = \tr(\MAT{\Sigma}^{-1}\MAT{\Sigma}_1\MAT{\Sigma}^{-1}\MAT{\Sigma}_2)
	    \label{eq:fimgaussian}
	\end{equation}
	The directional derivative of $\varphi$ at $\theta$ in the direction $\xi$ is obtained by $\phi_i(\theta+\xi)-\phi_i(\theta)$ and the result is then
	\begin{equation*}
		\D\varphi(\theta)[\xi] = \tau_i \MAT{A}\otimes\MAT{\xi}_{\MAT{B}} + \tau_i \MAT{\xi}_{\MAT{A}}\otimes\MAT{B} + \MAT{\xi}_{\tau_i} \MAT{A}\otimes \MAT{B}.
	\end{equation*}
	The result is obtained by plugging $\D\varphi(\theta)[\xi]$ into $\langle\cdot,\cdot\rangle^{\mathcal{H}^{++}_p}_{\MAT{\Sigma}}$, we have:
	\begin{multline*}
	        \langle\xi,\eta\rangle_{\theta}  =  \sum_{i=1}^n  \langle \tau_i \MAT{A}\otimes\MAT{\xi}_{\MAT{B}} + \tau_i \MAT{\xi}_{\MAT{A}}\otimes\MAT{B} + \MAT{\xi}_{\tau_i} \MAT{A}\otimes \MAT{B}, \\
	        \tau_i \MAT{A}\otimes\MAT{\eta}_{\MAT{B}} + \tau_i \MAT{\eta}_{\MAT{A}}\otimes\MAT{B} + \MAT{\eta}_{\tau_i} \MAT{A}\otimes \MAT{B}\rangle^{\mathcal{H}^{++}_p}_{\MAT{\Sigma}} \\
	        = \tr\left\{ (\tau_i\MAT{A}\otimes\MAT{B})^{-1}(\tau_i \MAT{A}\otimes\MAT{\xi}_{\MAT{B}} + \tau_i \MAT{\xi}_{\MAT{A}}\otimes\MAT{B} + \MAT{\xi}_{\tau_i} \MAT{A}\otimes \MAT{B})\right. \\
	       \left. (\tau_i\MAT{A}\otimes\MAT{B})^{-1}(\tau_i \MAT{A}\otimes\MAT{\eta}_{\MAT{B}} + \tau_i \MAT{\eta}_{\MAT{A}}\otimes\MAT{B} + \MAT{\eta}_{\tau_i} \MAT{A}\otimes \MAT{B})\right\}
	\end{multline*}
	Let us compute
	\begin{multline*}
	(\tau_i\MAT{A}\otimes\MAT{B})^{-1}(\tau_i \MAT{A}\otimes\MAT{\xi}_{\MAT{B}} + \tau_i \MAT{\xi}_{\MAT{A}}\otimes\MAT{B} + \MAT{\xi}_{\tau_i} \MAT{A}\otimes \MAT{B}) \\ = \MAT{A}^{-1}\MAT{\xi}_{\MAT{A}} \otimes \MAT{I} +\MAT{I} \otimes \MAT{B}^{-1} \MAT{\xi}_{\MAT{B}} + \frac{\MAT{\xi}_{\tau_i}}{\tau_i} \MAT{I}.
	\end{multline*}
	By exploiting the relations $(\MAT{M}\otimes\MAT{N})(\MAT{O}\otimes\MAT{P})=\MAT{M}\MAT{O}\otimes\MAT{N}\MAT{P}$, $\tr(\MAT{M}\otimes\MAT{N})=\tr(\MAT{M})\tr(\MAT{N})$ and $\tr(\MAT{A}^{-1}\MAT{\xi}_{\MAT{A}})=\tr(\MAT{A}^{-1}\MAT{\eta}_{\MAT{A}})=0$ (and same property for matrix $\MAT{B}$), we obtain the announced result.
\end{proof}

\noindent
\textbf{Remark}: we notice that the determinant constraint is the best choice with respect to information geometry since it allows to obtain a separable metric. This is not the case when the trace constraint is used instead. 

One can notice that the metric of Proposition \ref{prop:fim} is separable into two scaled Riemannian affine invariant metrics on $\MAT{A}$ and $\MAT{B}$, respectively, and a last term on $\VEC{\tau}$. 
The geometry of a product manifold resulting from such a separable metric is simply obtained by combining the geometries corresponding to each component.
In particular, the Riemannian exponential mapping at $\theta\in\mathcal{M}$ is defined for $\xi\in T_{\theta}\mathcal{M}$ as
\begin{equation}
	\exp_\theta^{\mathcal{M}}(\xi) = \left( \MAT{A}\exp(\MAT{A}^{-1}\MAT{\xi}_{\MAT{A}}), \MAT{B}\exp(\MAT{B}^{-1}\MAT{\xi}_{\MAT{B}}) ,\VEC{\tau} \odot \exp(\VEC{\tau}^{\odot^{-1}}\odot \MAT{\xi}_{\VEC{\tau}}) \right).
	\label{eq:expm}
\end{equation}
This exponential mapping will be very useful when it comes to optimizing a cost function on the manifold. As for the exponential mapping, it is easy to derive the geodesic distance since the metric is completely separable, and we have:
\begin{equation}
		\delta_{\mathcal{M}}^2(\theta_0,\theta_1) = \frac{b}{p}\delta_{\mathcal{H}^{++}_a}^2(\MAT{A}_{0},\MAT{A}_{1}) +\frac{a}{p}\delta_{\mathcal{H}^{++}_b}^2(\MAT{B}_{0},\MAT{B}_{1}) + \frac{1}{n}\delta_{\mathbb{R}_{++}^n}^2(\VEC{\tau}_{0},\VEC{\tau}_{1}) ,
	\label{eq:defdistance}
\end{equation}
where $\delta_{\mathcal{H}^{++}_p}^2(\MAT{\Sigma}_{0},\MAT{\Sigma}_{1})=\| \logm(\MAT{\Sigma}_{0}^{-1/2}\MAT{\Sigma}_{1}\MAT{\Sigma}_{0}^{-1/2})\|^2_2$ and $\delta_{\mathbb{R}_{++}^n}^2(\VEC{\tau}_{0},\VEC{\tau}_{1})=\|\log(\VEC{\tau}_{0}^{-1}\odot\VEC{\tau}_{1})\|_2^2$.

Thanks to this geometry, we are now able to derive an online change detection when the covariance matrix possesses a Kronecker structure and the data follows a SG distribution.

\subsection{Online estimation of $\MAT{A}_{0}$, $\MAT{B}_{0}$ and $\VEC{\tau}_{0}$}

Given a new pixel patch $\mathcal{X}_{T+1} = \{\VEC{x}_i^{(T+1)}\}_{i \in \llbracket 1,n\rrbracket}$ in the time-series, the MLE of the parameter $\theta^{(0)}= \{ \MAT{A}^{(0)} , \MAT{B}^{(0)}, \VEC{\tau}^{(0)} \}$ is expressed as the fixed point \eqref{eq:MLE_H0}, that depends on the whole data $\mathcal{X}_{1:T+1} $.
In order to evaluate the proposed GLRT in a lightweight streaming fashion, our goal is to derive a recursive algorithm that provides an estimate of $\theta^{(0)}$ at time $T+1$, denoted $\hat{\theta}^{(0)}_{ \mathcal{X}_{T+1} }$, given solely the previous estimate $\hat{\theta}^{(0)}_{\mathcal{X}_{1:T} }$ and the incoming data $\mathcal{X}_{T+1}$.
Though many options can be envisioned to obtained such recursive algorithm, the interest of leveraging the stochastic natural gradient descent~\citep{zhou2019fast,bouchard2020riemannian} is twofold:
\begin{itemize}
    \item By leveraging the Riemannian gradient associated with the Fisher information metric (information geometry), a theoretically optimal gradient step can easily be obtained (hence, no line search will be involved in the update). Furthermore, this step only depends on intrinsic dimensions ($a$, $b$, $n$) of the estimation problem.
    \item The corresponding stochastic Riemannian gradient descent ensures statistical efficiency, i.e., that the recursive estimate $\hat{\theta}^{(0)}_{\mathcal{X}_{1:T+1}}$ tends to the MLE of $\theta^{(0)}$ under $\Hzero$ when $T \rightarrow \infty $, so the corresponding online GLRT is also asymptotically consistent.
\end{itemize}
An iteration of the proposed stochastic natural gradient descent~\citep{zhou2019fast,bouchard2020riemannian} is expressed as:
\begin{equation}
    \hat{\theta}^{(0)}_{ \mathcal{X}_{1:T+1} } 
    =
    \exp_{ 
    \hat{\theta}^{(0)}_{\mathcal{X}_{1:T}}
    }^{\mathcal{M}}
    \left(
        -\frac{\alpha_0}{T}\grad \mathcal{L}^{(T+1)}( \hat{\theta}^{(0)}_{\mathcal{X}_{1:T}} )
    \right)    
     \label{eq:KronSGD}
\end{equation}
where $\exp_{.}^{\mathcal{M}}$ is given in \eqref{eq:expm}, and where $\grad \mathcal{L}^{(T+1)}( \hat{\theta}^{(0)}_{\mathcal{X}_{1:T}})$
denotes the Riemannain gradient of the likelihood
of the data $\mathcal{X}_{T+1}$
evaluated at point $\hat{\theta}^{(0)}_{\mathcal{X}_{1:T}}$.
In order to apply this algorithm to the problem of interest, it remains to compute the Riemannian gradient of $\mathcal{L}^{(T+1)}$ according to the metric of Proposition~\ref{prop:fim}.
This Riemannian gradient is given in Proposition~\ref{prop:rgrad}.

\begin{Proposition}[Riemannian gradient]
	\label{prop:rgrad}
	The Riemannian gradient of $\mathcal{L}^{(T+1)}$ at $\theta\in\mathcal{M}$ according to the metric of Proposition~\ref{prop:fim} is given by
	\begin{multline*}
		\grad \mathcal{L}^{(T+1)}  (\theta) = 
		\left(
		-\sum_{i=1}^n\frac{1}{b\tau_i} P_{\MAT{A}}(\VEC{M}_i^T\MAT{B}^{-T}\MAT{M}_i^{*}) \right.
		\\\left.,-\sum_{i=1}^n\frac{1}{a\tau_i} P_{\MAT{B}}(\VEC{M}_i\MAT{A}^{-T}\MAT{M}_i^H) , n(\VEC{q}^{(T+1)}-p\VEC{\tau})		\right),
	\end{multline*}
	where 
 $q_i(\MAT{A},\MAT{B}) = \left.\VEC{x}_i^{(T+1)}\right.^H (\MAT{A} \otimes \MAT{B})^{-1} \VEC{x}^{(T+1)}_i$
 and
 $\VEC{q}^{(T+1)}=(q_1(\MAT{A},\MAT{B}),\ldots,q_n(\MAT{A},\MAT{B}))^T$.
 The matrix $P_{\MAT{A}}:\mathcal{H}_a\to T_{\MAT{A}}s\mathcal{H}^{++}_a$ is the orthogonal projection map such that 
	\begin{equation*}
		P_{\MAT{A}}(\MAT{\xi}_{\MAT{A}}) = \MAT{\xi}_{\MAT{A}} - \frac{\tr(\MAT{A}^{-1}\MAT{\xi}_{\MAT{A}})}a\MAT{A}.
	\end{equation*}
 and the matrix $P_{\MAT{B}}(\MAT{\xi}_{\MAT{B}})$ follows from the fame expression.
\end{Proposition}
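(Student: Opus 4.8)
The plan is to compute the Euclidean (ambient) gradient of $\mathcal{L}^{(T+1)}$ with respect to each of the three blocks $\MAT{A}$, $\MAT{B}$, $\VEC{\tau}$, and then convert it to the Riemannian gradient by using the metric of Proposition~\ref{prop:fim} together with the orthogonal projection onto the tangent spaces \eqref{eq:TangentSpace}. Recall that for a Riemannian metric of the form $\langle\xi,\eta\rangle_\theta = \xi^T G(\theta)\eta$ the Riemannian gradient $\grad f$ is characterised by $\D f(\theta)[\xi] = \langle \grad f(\theta),\xi\rangle_\theta$ for all tangent $\xi$; when the manifold is a submanifold cut out by a linear constraint, one first solves this relation ignoring the constraint and then projects orthogonally (with respect to the metric) onto the tangent space. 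Since the metric of Proposition~\ref{prop:fim} is block-separable, each block can be treated independently.

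First I would write $\mathcal{L}^{(T+1)}(\theta)$ explicitly for the single incoming patch $\mathcal{X}_{T+1}$: using $\log|\MAT{A}\otimes\MAT{B}| = b\log|\MAT{A}| + a\log|\MAT{B}|$ and the identity $\left.\VEC{x}_i^{(T+1)}\right.^H(\MAT{A}\otimes\MAT{B})^{-1}\VEC{x}_i^{(T+1)} = \tr(\MAT{A}^{-T}\MAT{M}_i^H\MAT{B}^{-1}\MAT{M}_i)$ already established in the proof of the Proposition of Section~\ref{sec:glrt_offline}. Then, for the $\MAT{A}$-block, I would take the directional derivative in a direction $\MAT{\xi}_{\MAT{A}}\in T_{\MAT{A}}s\mathcal{H}^{++}_a$: the log-determinant term contributes $-nb\,\tr(\MAT{A}^{-1}\MAT{\xi}_{\MAT{A}})=0$ because of the trace constraint, so only the quadratic terms survive, giving $\D_{\MAT{A}}\mathcal{L}^{(T+1)}[\MAT{\xi}_{\MAT{A}}] = \sum_i \frac{1}{\tau_i}\tr(\MAT{A}^{-1}\MAT{\xi}_{\MAT{A}}\MAT{A}^{-1}\MAT{M}_i^T\MAT{B}^{-T}\MAT{M}_i^*)$. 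Matching this against $\langle \grad_{\MAT{A}}\mathcal{L}^{(T+1)},\MAT{\xi}_{\MAT{A}}\rangle_\theta = \frac{b}{p}\tr(\MAT{A}^{-1}\grad_{\MAT{A}}\MAT{A}^{-1}\MAT{\xi}_{\MAT{A}})$ identifies the unprojected gradient as a scalar multiple of $\sum_i \frac{1}{\tau_i}\MAT{M}_i^T\MAT{B}^{-T}\MAT{M}_i^*$ (the factor $p$ from the metric cancelling against the $n$ from the likelihood up to the stated constants), and then projecting onto $T_{\MAT{A}}s\mathcal{H}^{++}_a$ via $P_{\MAT{A}}$ yields the announced formula; the $\MAT{B}$-block is identical by the symmetry of the Kronecker structure. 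For the $\VEC{\tau}$-block the derivative of $-p\sum_i\log\tau_i - \sum_i q_i/\tau_i$ is the vector with entries $-p/\tau_i + q_i/\tau_i^2$, and inverting the diagonal metric $\frac{1}{n}\diag(\VEC{\tau})^{-2}$ multiplies each entry by $n\tau_i^2$, producing $n(\VEC{q}^{(T+1)} - p\VEC{\tau})$.

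The main technical point — and the step I would be most careful with — is verifying that $P_{\MAT{A}}$ as defined is indeed the \emph{orthogonal} projection onto $T_{\MAT{A}}s\mathcal{H}^{++}_a$ with respect to the affine-invariant metric $\langle\cdot,\cdot\rangle_{\MAT{A}}$ (not the Euclidean one), i.e.\ that $\MAT{\xi}_{\MAT{A}} - \frac{\tr(\MAT{A}^{-1}\MAT{\xi}_{\MAT{A}})}{a}\MAT{A}$ lies in the tangent space (immediate: its image under $\tr(\MAT{A}^{-1}\cdot)$ vanishes since $\tr(\MAT{A}^{-1}\MAT{A})=a$) and that the removed component $\frac{\tr(\MAT{A}^{-1}\MAT{\xi}_{\MAT{A}})}{a}\MAT{A}$ is metric-orthogonal to every tangent vector (check: $\tr(\MAT{A}^{-1}\MAT{A}\,\MAT{A}^{-1}\MAT{\eta}_{\MAT{A}}) = \tr(\MAT{A}^{-1}\MAT{\eta}_{\MAT{A}}) = 0$). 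This also explains why the log-determinant term drops out cleanly. Once this is in hand, everything else is bookkeeping of constants, and collecting the three blocks gives exactly the gradient stated in Proposition~\ref{prop:rgrad}.
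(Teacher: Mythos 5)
Your proposal takes essentially the same route as the paper: compute the directional derivative of the per-patch likelihood (with the log-determinant terms dropping out), identify the gradient block-by-block through the separable Fisher metric of Proposition~\ref{prop:fim}, and project onto the tangent space --- and your explicit verification that $P_{\MAT{A}}$ is the \emph{metric}-orthogonal projection is a welcome detail the paper leaves implicit. The one weak point is the final constant identification for the $\MAT{A}$ and $\MAT{B}$ blocks: matching your displayed derivative $\sum_i\frac{1}{\tau_i}\tr(\MAT{A}^{-1}\MAT{\xi}_{\MAT{A}}\MAT{A}^{-1}\MAT{M}_i^T\MAT{B}^{-T}\MAT{M}_i^{*})$ against $\frac{b}{p}\tr(\MAT{A}^{-1}\grad_{\MAT{A}}\mathcal{L}\,\MAT{A}^{-1}\MAT{\xi}_{\MAT{A}})$ yields a prefactor $p/b$ rather than the stated $-1/b$, and your parenthetical justification (``the factor $p$ from the metric cancelling against the $n$ from the likelihood'') does not apply since no $n$ appears in that block; this sign-and-scale bookkeeping is precisely the step the paper also glosses over (``by identification and projection''), so you should carry it out explicitly rather than assert it.
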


\begin{proof}
The likelihood of the new dataset $\{\VEC{x}_i^{(T+1)}\}_{i \in \llbracket 1,n\rrbracket}$ is
\begin{equation}
	\mathcal{L}^{(T+1)}(\theta^{(T+1)}) = \sum_{i=1}^n \ell_i(\theta^{(T+1)}),
\end{equation}
where
\begin{equation*}
	\ell_i(\theta) = -p\log(\tau_i) - \frac{q_i(\MAT{A},\MAT{B})}{\tau_i},
\end{equation*}
with $q_i(\MAT{A},\MAT{B}) = \left.\VEC{x}_i^{(T+1)}\right.^H (\MAT{A} \otimes \MAT{B})^{-1} \VEC{x}^{(T+1)}_i$.
Notice that the log-determinant terms in the likelihood vanish because we consider matrices of determinant one.
    The Riemannian gradient is defined according to the metric through the relation:
    \begin{equation}
        \langle \grad \mathcal{L}^{(T+1)}(\theta),\MAT{\xi} \rangle_{\theta}^{\mathcal{M}} = \D \mathcal{L}^{(T+1)}(\theta)[\MAT{\xi}].
        \label{eq:defgrad}
    \end{equation}
    Thus we need the directional derivative of $\mathcal{L}^{(T+1)}(\theta)$:
    \begin{equation}
        \begin{array}{lll}
             \D \mathcal{L}^{(T+1)}(\theta)[\MAT{\xi}] & = &  -\sum_{i=1}^n\left(\frac{p\MAT{\xi}_{\tau_i}}{\tau_i}+\frac{1}{\tau_i^2}q_i(\MAT{A},\MAT{B})\MAT{\xi}_{\tau_i}\right) 
               - \sum_{i=1}^n\frac{\D q_i(\MAT{A},\MAT{B})[\xi]}{\tau_i} \vspace{0.2cm}\\
             & = &  \frac{1}{n} \langle n(\VEC{q}^{(t)}-p\VEC{\tau}),\MAT{\xi}_{\VEC{\tau}} \rangle_{\tau}^{\mathbb{R}^n_{++}}  - \sum_{i=1}^n\frac{\D q_i(\MAT{A},\MAT{B})[\xi]}{\tau_i}
        \end{array}
    \end{equation}
    where $\langle \MAT{\xi}_{\VEC{\tau}},\MAT{\eta}_{\VEC{\tau}} \rangle_{\tau}^{\mathbb{R}^n_{++}} = (\MAT{\xi}_{\VEC{\tau}}\odot \VEC{\tau}^{\odot^{-1}})^T(\MAT{\eta}_{\VEC{\tau}}\odot \VEC{\tau}^{\odot^{-1}})$. The second term can be obtained thanks to the following result:
    \begin{equation}
        \begin{array}{cc}
             g(\MAT{X}) = \tr(\MAT{X} \VEC{x}\VEC{x}^H), & \D g(\MAT{X})[\MAT{\xi}] = \tr(\MAT{\xi}\VEC{x}\VEC{x}^H) \\
             g(\MAT{X}) = \tr(\MAT{X}^{-1} \VEC{x}\VEC{x}^H), & \D g(\MAT{X})[\MAT{\xi}] = \tr(\MAT{X}^{-1}\MAT{\xi}\MAT{X}^{-1}\VEC{x}\VEC{x}^H) 
        \end{array}
    \end{equation}
    which leads to
    \begin{equation}
        \begin{array}{lll}
             \D q_i(\MAT{A},\MAT{B})[\xi] & = & - \tr(\MAT{A}^{-1}\MAT{\xi}_{\MAT{A}}\MAT{A}^{-1}\MAT{M}_i^T\MAT{B}^{-T}\MAT{M}_i^{*}) \\ & &-\tr(\MAT{B}^{-1}
             \MAT{\xi}_{\MAT{B}}\MAT{B}^{-1}\MAT{M}_i\MAT{A}^{-T}\MAT{M}_i^{*}) \\
             & = & -\langle \MAT{M}_i^T\MAT{B}^{-T}\MAT{M}_i^{*},\MAT{\xi}_{\MAT{A}}\rangle_{\MAT{A}}^{\mathcal{H}_a^{++}} -\langle \MAT{M}_i\MAT{A}^{-T}\MAT{M}_i^H,\MAT{\xi}_{\MAT{B}}\rangle_{\MAT{B}}^{\mathcal{H}_b^{++}}
        \end{array}
    \end{equation}
    Finally, the result of the Riemannian gradient according to the metric of Proposition~\ref{prop:fim} is obtained through identification by \eqref{eq:defgrad} and projection onto the tangent spaces.
\end{proof}

Thanks to this proposition, the online GLRT is finally computed by plugging the estimates updates from \eqref{eq:KronSGD} in the detector \eqref{eq:GLRTKronCG}. 

To evaluate the quality of the estimation of parameters $\MAT{A}_{0}$, $\MAT{B}_{0}$ and $\VEC{\tau}_{0}$, it is interesting to derive the intrinsic Cramér-Rao bound (ICRB) \citep{Smith2005,bouchard2020riemannian} and the corresponding inequality. This is the purpose of the next proposition.
\begin{Proposition}[ICRB]
	Given an unbiased estimator $\hat{\theta}^{(T)}$ of $\hat{\theta}_{\mathcal{X}_{1:T}}^{(T)}$ corresponding to a MITS with $T$ data, the ICRB corresponding to the error measured with the Riemannian distance \eqref{eq:defdistance} and the metric given in the proposition \eqref{prop:fim} is
	\begin{equation*}
		\mathbb{E}[\delta^2_{\mathcal{M}}(\theta^{(T)},\hat{\theta}_{\mathcal{X}_{1:T}}^{(T)})] \leq \frac{(a^2-1)+(b^2-1)+n}{Tpn}
	\end{equation*}
\end{Proposition}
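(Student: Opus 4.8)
The plan is to invoke the standard intrinsic Cramér–Rao machinery (as in \citep{Smith2005,breloy2019intrinsic,bouchard2020riemannian}): for an unbiased estimator on a Riemannian manifold equipped with the Fisher information metric, the expected squared geodesic error is bounded below — and in the efficient regime approximated from above — by the trace of the inverse Fisher information operator, which, under the Fisher metric, is simply the dimension of the parameter manifold divided by the number of independent observations. So the core of the argument reduces to two bookkeeping tasks: (i) identify the dimension of $\mathcal{M} = s\mathcal{H}^{++}_a \times s\mathcal{H}^{++}_b \times \mathbb{R}^n_{++}$, and (ii) correctly count the number of "samples" contributing Fisher information, keeping track of the scalar weights $b/p$, $a/p$, $1/n$ appearing in the metric of Proposition~\ref{prop:fim}.

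First I would recall that when the metric used to measure the estimation error coincides with the Fisher information metric, the intrinsic CRB takes the clean form $\mathbb{E}[\delta^2_{\mathcal{M}}(\theta,\hat\theta)] \lesssim \dim(\mathcal{M})/N$, where $N$ is the effective number of i.i.d. observations, because the Fisher information operator is then the identity (times $N$) with respect to that metric's orthonormal frame. For complex HPD matrices, $\dim_{\mathbb{R}}\mathcal{H}^{++}_a = a^2$, and the unit-determinant constraint removes one real degree of freedom, giving $\dim s\mathcal{H}^{++}_a = a^2 - 1$; likewise $\dim s\mathcal{H}^{++}_b = b^2 - 1$; and $\dim \mathbb{R}^n_{++} = n$. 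Hence $\dim \mathcal{M} = (a^2-1) + (b^2-1) + n$, which is exactly the numerator of the claimed bound.

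Next I would pin down the denominator $Tpn$. Each of the $T$ time slices contributes a patch of $n$ i.i.d. pixels, so there are $Tn$ vector observations; however the metric of Proposition~\ref{prop:fim} is the per-pixel Fisher metric already summed over the patch and then \emph{rescaled} — the block weights $b/p$ and $a/p$ on $\MAT{A},\MAT{B}$ and $1/n$ on $\VEC{\tau}$ are precisely what makes it separable. Tracking these constants through the inversion of the Fisher operator: the $\MAT{A}$-block of the per-observation Fisher information, expressed in the metric of Proposition~\ref{prop:fim}, scales like $p/b \cdot (\text{something})$, and when one assembles $Tn$ observations and inverts, the $\MAT{A}$ contribution to the trace becomes $(a^2-1) b / (p \cdot Tn \cdot b) = (a^2-1)/(Tnp)$ after the $b$-factors cancel; symmetrically the $\MAT{B}$ contribution is $(b^2-1)/(Tnp)$, and the $\VEC{\tau}$ contribution is $n/(Tnp) = 1/(Tp)$, i.e. $n/(Tnp)$. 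Summing the three pieces over a common denominator $Tpn$ yields $\big[(a^2-1)+(b^2-1)+n\big]/(Tpn)$, as announced. Alternatively — and more cleanly — one observes that the online natural-gradient recursion \eqref{eq:KronSGD} with step $\alpha_0/T$ is exactly a stochastic approximation whose asymptotic covariance equals the inverse Fisher information, so the bound is read off directly from the metric normalization.

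The main obstacle, and the only place real care is needed, is the constant bookkeeping in the denominator: one must be consistent about whether the Fisher metric in Proposition~\ref{prop:fim} is "per pixel," "per patch," or "per patch rescaled," and propagate the weights $b/p$, $a/p$, $1/n$ together with the factor $Tn$ of total observations through the operator inversion without double-counting $n$ or dropping the $p$. Everything else is dimension counting. I would present the argument by first citing the generic intrinsic CRB $\mathbb{E}[\delta^2] \le \tr(\mathcal{F}^{-1})$ with $\mathcal{F}$ the total Fisher operator, then noting that in the Fisher metric $\mathcal{F} = Tn \cdot \mathrm{Id}$ on each block up to the stated scalar weights, and finally computing $\tr(\mathcal{F}^{-1})$ block by block to collapse to the claimed expression.
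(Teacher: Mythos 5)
Your proposal is correct and follows essentially the same route as the paper's own (very terse) proof: since the error is measured in the distance induced by the Fisher information metric, the intrinsic CRB collapses to the ratio of the dimension of $\mathcal{M}$, namely $(a^2-1)+(b^2-1)+n$, to the normalization factor $Tpn$ by which the total Fisher information exceeds the metric of Proposition~\ref{prop:fim}. Your explicit block-by-block tracking of the weights $b/p$, $a/p$, $1/n$ is merely a fleshed-out version of the one-line argument the authors give by citing \citep{Smith2005,bouchard2020riemannian}, and it is consistent with the per-component bounds in \eqref{eq:ICRB_param}.
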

\begin{proof}
Since the distance is obtained trough the Fisher information metric, the ICRB for the problem of estimation of $\theta^{(T)}$ could be easily deduced from the approach of \citep{Smith2005,bouchard2020riemannian} and is the ratio between the dimension of the problems and the number of parameters to estimate. 
\end{proof}

Thanks to the fact that the metric is separable, it is then easy to obtain the following inequalities for each component of $\theta^{(T)}$:
\begin{equation}
    \begin{array}{lll}
      \mathbb{E}[\delta^2_{\mathcal{H}^{++}_a}(\hat{\MAT{A}}_{\mathcal{X}_{1:T}}^{(T)},\MAT{A}^{(T)})]   & \leq & \frac{(a^2-1)}{bTn},  \\
      \mathbb{E}[\delta^2_{\mathcal{H}^{++}_b}(\hat{\MAT{B}}_{\mathcal{X}_{1:T}}^{(T)},\MAT{B}^{(T)})]    & \leq & \frac{(b^2-1)}{aTn}, \\
      \mathbb{E}[\delta^2_{\mathbb{R}_{++}^n}(\hat{\VEC{\tau}}_{\mathcal{X}_{1:T}}^{(T)},\VEC{\tau}^{(T)})]    & \leq & \frac{1}{Tp}.
    \end{array}
    \label{eq:ICRB_param}
\end{equation}

\section{Numerical experiments}
\label{sec:num_exp}
In this section, we first show the performances in terms of MSE of the online estimator proposed in \ref{sec:geo}. Next, we will show the interest of this online approach to the problem of change detection with simulated data as well as with an experiment composed of real data.

\subsection{Estimation performance}

In this section, we compare the performance of the maximum likelihood estimator obtained with a usual Riemannian optimization algorithm and of the one obtained with the on-line procedure of \eqref{eq:KronSGD}.
In order to do so, simulated data drawn from the multivariate K-distribution with $\nu=1$ for the shape parameter are generated with a Kronecker product structured covariance. To build a $p\times p$ ($p=12$) covariance matrix, we compute
\begin{equation}
	\begin{array}{c}
		\MAT{\Sigma} = \MAT{A} \otimes \MAT{B},
		\\
		\MAT{A} = \MAT{U}_{\MAT{A}} \MAT{\Lambda}_{\MAT{A}} \MAT{U}_{\MAT{A}}^T,
		\qquad
		\MAT{B} = \MAT{U}_{\MAT{B}} \MAT{\Lambda}_{\MAT{B}} \MAT{U}_{\MAT{B}}^T,
	\end{array}
\end{equation}
where $a=4$ and $b=3$,
\begin{itemize}
	\item $\MAT{U}_{\MAT{A}}$ and $\MAT{U}_{\MAT{B}}$ are random orthogonal matrices,
	\item $\MAT{\Lambda}_{\MAT{A}}$ and $\MAT{\Lambda}_{\MAT{B}}$ are diagonal matrices whose minimal and maximal elements are $\nicefrac{1}{\sqrt{c}}$ and $\sqrt{c}$ ($c=10$ is the condition number with respect to inversion); their other elements are randomly drawn from the uniform distribution between $\nicefrac{1}{\sqrt{c}}$ and $\sqrt{c}$; the determinant of $\MAT{\Lambda}_{\MAT{A}}$ is then normalized.
\end{itemize}
The number of samples is fixed to $n=8$ since the conclusions keep the same for larger $n$. The number of frames $T$ varies from 1 to 1000. 1000 trials are used to estimate the MSEs. 

For this experiment, we consider the following estimators:
\begin{itemize}
	\item the classical maximum-likelihood estimator obtained with Riemannian gradient descent (GD). Optimization for this estimator is performed with pymanopt toolbox \citep{manopt,pymanopt}.
	\item the online version obtained through stochastic gradient descent (SGD) of \eqref{eq:KronSGD}.
\end{itemize}
The online algorithm is initialized with $\theta_0=(\MAT{A}^{(1)},\MAT{B}^{(1)},\mathbf{\tau}^{(1)})$ obtained with the GD when $T=1$. 

In order to measure the performance of the estimators, the distances defined in \eqref{eq:defdistance}, $\delta^2_{\mathcal{H}^{++}_a}$, $\delta^2_{\mathcal{H}^{++}_a}$ and $\delta^2_{\mathbb{R}_{++}^n}$, will be used. Moreover, the ICRBs for each term given in \eqref{eq:ICRB_param} are also computed.

The results are presented in Figure~\ref{fig:mse_AB} for the MSEs of $\MAT{A}$ and $\MAT{B}$ and in \ref{fig:mse_tau} for the MSE of $\VEC{\tau}$. First, we notice that the on-line version converges to the classical estimate which is expected as stated in \citep{zhou2019fast}. We notice that for a value of $T=100$ the results of both algorithms become very close which is clearly interesting in terms of practical interest in applications like change detection.

\begin{figure}[t]
\begin{center}
\begin{subfigure}[b]{0.65\textwidth}
	% This file was created with tikzplotlib v0.10.1.
\begin{tikzpicture}

\definecolor{darkorange25512714}{RGB}{255,127,14}
\definecolor{darkslategray38}{RGB}{38,38,38}
\definecolor{forestgreen4416044}{RGB}{44,160,44}
\definecolor{lavender234234242}{RGB}{234,234,242}
\definecolor{lightgray204}{RGB}{204,204,204}
\definecolor{steelblue31119180}{RGB}{31,119,180}

\begin{axis}[
axis background/.style={fill=lavender234234242},
axis line style={white},
legend cell align={left},
legend style={
  fill opacity=0.8,
  draw opacity=1,
  text opacity=1,
  draw=lightgray204,
  fill=lavender234234242
},
log basis x={10},
log basis y={10},
tick align=outside,
title={MSE on estimation of \(\displaystyle \mathbf{A}\). \(\displaystyle a=4\), \(\displaystyle b=3\).},
x grid style={white},
xlabel=\textcolor{darkslategray38}{\(\displaystyle T\)},
xmajorgrids,
xmin=0.732911438554269, xmax=682.210665160709,
xmode=log,
xtick style={color=darkslategray38},
y grid style={white},
ylabel=\textcolor{darkslategray38}{MSE},
ymajorgrids,
ymin=0.000926209215222125, ymax=0.642261817427857,
ymode=log,
ytick style={color=darkslategray38}
]
\addplot [semithick, steelblue31119180]
table {%
1 0.3570162127022
3 0.167709809677326
5 0.108860171941678
10 0.0587673185559684
20 0.0302632433831038
50 0.0124171541084058
100 0.00615902981371699
400 0.00156090621502931
500 0.00124694011841292
};
\addlegendentry{Gradient Kronecker $\mathbf{A}$}
\addplot [semithick, darkorange25512714]
table {%
1 0.477062855788239
3 0.213643848250412
5 0.133080902263682
10 0.0676113098807557
20 0.0332032053998289
50 0.013026314783867
100 0.00632395886706776
400 0.00157563350108662
500 0.00126050093592433
};
\addlegendentry{Stochastic Kronecker $\mathbf{A}$}
\addplot [semithick, forestgreen4416044]
table {%
1 0.3125
3 0.15625
5 0.104166666666667
10 0.0568181818181818
20 0.0297619047619048
50 0.0122549019607843
100 0.00618811881188119
400 0.00155860349127182
500 0.00124750499001996
};
\addlegendentry{ICRB $\mathbf{A}$}
\end{axis}

\end{tikzpicture}
\end{subfigure}
\begin{subfigure}[b]{0.65\textwidth}
	% This file was created with tikzplotlib v0.10.1.
\begin{tikzpicture}

\definecolor{darkorange25512714}{RGB}{255,127,14}
\definecolor{darkslategray38}{RGB}{38,38,38}
\definecolor{forestgreen4416044}{RGB}{44,160,44}
\definecolor{lavender234234242}{RGB}{234,234,242}
\definecolor{lightgray204}{RGB}{204,204,204}
\definecolor{steelblue31119180}{RGB}{31,119,180}

\begin{axis}[
axis background/.style={fill=lavender234234242},
axis line style={white},
legend cell align={left},
legend style={
  fill opacity=0.8,
  draw opacity=1,
  text opacity=1,
  draw=lightgray204,
  fill=lavender234234242
},
log basis x={10},
log basis y={10},
tick align=outside,
title={MSE on estimation of \(\displaystyle \mathbf{B}\). \(\displaystyle a=4\), \(\displaystyle b=3\).},
x grid style={white},
xlabel=\textcolor{darkslategray38}{T},
xmajorgrids,
xmin=0.732911438554269, xmax=682.210665160709,
xmode=log,
xtick style={color=darkslategray38},
y grid style={white},
ylabel=\textcolor{darkslategray38}{MSE},
ymajorgrids,
ymin=0.000369049596409876, ymax=0.281495069586041,
ymode=log,
ytick style={color=darkslategray38}
]
\addplot [semithick, steelblue31119180]
table {%
1 0.143399281474524
3 0.066573214999724
5 0.0432987183753161
10 0.0230782374022517
20 0.0122039867987429
50 0.00489779567637116
100 0.00251799594453833
400 0.000636134709094535
500 0.000508833430212024
};
\addlegendentry{Gradient Kronecker $\mathbf{B}$}
\addplot [semithick, darkorange25512714]
table {%
1 0.208186826211485
3 0.0904897313670738
5 0.0557896662841426
10 0.0278103560812835
20 0.0137224369266994
50 0.00519735020444157
100 0.00262039766929099
400 0.000643171011222107
500 0.00051455546977885
};
\addlegendentry{Stochastic Kronecker $\mathbf{B}$}
\addplot [semithick, forestgreen4416044]
table {%
1 0.125
3 0.0625
5 0.0416666666666667
10 0.0227272727272727
20 0.0119047619047619
50 0.00490196078431373
100 0.00247524752475248
400 0.000623441396508728
500 0.000499001996007984
};
\addlegendentry{ICRB $\mathbf{B}$}
\end{axis}

\end{tikzpicture}
\end{subfigure}
	\caption{
		Mean of error measures on $\MAT{A}$ (left) and $\MAT{B}$ (left) of the classical gradient descent method (GD) and its on-line counterpart (SGD) as functions of the number of frames $T$. Means are computed over $1000$ simulated sets $\{\VEC{x}_i\}_{i=1}^n$ with $\nu=1$. The data size is $p=12$ with $a=4$ and $b=3$ and the number of samples is $n=8$.
	}
\label{fig:mse_AB}
 \end{center}
\end{figure}

\begin{figure}[t]
\centering
	% This file was created with tikzplotlib v0.10.1.
\begin{tikzpicture}

\definecolor{darkorange25512714}{RGB}{255,127,14}
\definecolor{darkslategray38}{RGB}{38,38,38}
\definecolor{forestgreen4416044}{RGB}{44,160,44}
\definecolor{lavender234234242}{RGB}{234,234,242}
\definecolor{lightgray204}{RGB}{204,204,204}
\definecolor{steelblue31119180}{RGB}{31,119,180}

\begin{axis}[
axis background/.style={fill=lavender234234242},
axis line style={white},
legend cell align={left},
legend style={
  fill opacity=0.8,
  draw opacity=1,
  text opacity=1,
  draw=lightgray204,
  fill=lavender234234242
},
log basis x={10},
log basis y={10},
tick align=outside,
title={MSE on estimation of \(\displaystyle \boldsymbol{\tau}\). \(\displaystyle a=4\), \(\displaystyle b=3\).},
x grid style={white},
xlabel=\textcolor{darkslategray38}{\(\displaystyle T\)},
xmajorgrids,
xmin=0.732911438554269, xmax=682.210665160709,
xmode=log,
xtick style={color=darkslategray38},
y grid style={white},
ylabel=\textcolor{darkslategray38}{MSE},
ymajorgrids,
ymin=0.000122711229199615, ymax=0.0988580824438407,
ymode=log,
ytick style={color=darkslategray38}
]
\addplot [semithick, steelblue31119180]
table {%
1 0.0560505707915002
3 0.0245356507029833
5 0.015301247547722
10 0.00804763136970804
20 0.00414607266202801
50 0.00167381447236113
100 0.000851406616911022
400 0.000213971175328741
500 0.000171799889954598
};
\addlegendentry{Gradient Kronecker $\boldsymbol{\tau}$}
\addplot [semithick, darkorange25512714]
table {%
1 0.0729315528397592
3 0.0373573102780496
5 0.0239226135767275
10 0.0119899649794041
20 0.00572256553432043
50 0.00204271878113437
100 0.000970682111700002
400 0.000224948699550232
500 0.000179686592911959
};
\addlegendentry{Stochastic Kronecker $\boldsymbol{\tau}$}
\addplot [semithick, forestgreen4416044]
table {%
1 0.0416666666666667
3 0.0208333333333333
5 0.0138888888888889
10 0.00757575757575758
20 0.00396825396825397
50 0.00163398692810458
100 0.000825082508250825
400 0.000207813798836243
500 0.000166333998669328
};
\addlegendentry{ICRB $\boldsymbol{\tau}$}
\end{axis}

\end{tikzpicture}
	\caption{
		Mean of error measures on $\boldsymbol{\tau}$ of the classical gradient descent method (GD) and its on-line counterpart (SGD) as functions of the number of frames $T$. Means are computed over $1000$ simulated sets $\{\VEC{x}_i\}_{i=1}^n$ with $\nu=1$. The data size is $p=12$ with $a=4$ and $b=3$ and the number of samples is $n=8$.
	}
\label{fig:mse_tau}
\end{figure}

\subsection{Performance in change detection with simulated data}

In this section, we compare the ROC plots of the change detection detector of \eqref{eq:GLRTKronCG} in the offline and online versions. We hope that for a large value of $T$, the performance become close. Like in the previous subsection, $p=12$ with $a=4$ and $b=3$. We also simulate a K-distribution for the data. But, the covariance matrices are differently built to be better fitted with the change detection application. In both hypothesis, $\MAT{A}$ and $\MAT{B}$ are Toeplitz matrices of correlation coefficient $\rho_0$ and $\rho_1$ (for $\mathbf{A}$: $\rho_0 = 0.3+0.7j$, $\rho_1=0.3+0.5j$. For $\mathbf{B}$: $\rho_0 = 0.3+0.6j$, $\rho_1=0.4+0.5j$). For the $H_1$ hypothesis, the change is put at the middle of the time series, i.e. at $T/2$. The number of samples $n$ is 13. To show the interest to take into account of the Kronecker structure in the change detection algorithm, we compare our two versions of the Kronecker change detection to the one proposed in \citep{Mian2019TSP} and given in \eqref{eq:GLRTMatGen}. To resume, we test 4 change detectors, $\hat{\Lambda}_{SG}^{(T)}$ and $\hat{\Lambda}_{K-SG}^{(T)}$, and their corresponding on-line versions, $\hat{\Lambda}_{SG-O}^{(T)}$ and $\hat{\Lambda}_{K-SG-O}^{(T)}$. For the offline versions, the number of images $T$ is equal to 50 whereas the online versions are built from $T$ varying from 2 to 50. All ROC plots are estimated by using 5000 trials. 

Figure \ref{fig:ROC_Gaussian} shows the ROC plots when the data follows a Gaussian distribution. Figures \ref{fig:ROC_nu100} and \ref{fig:ROC_nu1} show the same plots but when the data follow a K distribution with $\nu=100$ and $\nu=1$ as shape parameters respectively. We notice that all the Kronecker versions outperform the classical ones which is expected by the structure of the covariance matrix.  For the heterogeneous hypothesis ($\nu=1$), we get a very good result, since online versions perform similarly to offline detection when $T=50$. Conversely, we have the surprising result that online versions have a slower convergence speed for Gaussian or quasi-Gaussian data. Nevertheless, we can conclude that the detection results of the online detectors improve with $T$ and converge towards the offline result, particularly when the Kronecker structure is used. 

\begin{figure}[h]
    \centering
    \begin{subfigure}[b]{0.45\textwidth}
        % This file was created with tikzplotlib v0.10.1.
\begin{tikzpicture}

\definecolor{crimson2143940}{RGB}{214,39,40}
\definecolor{darkorange25512714}{RGB}{255,127,14}
\definecolor{darkslategray38}{RGB}{38,38,38}
\definecolor{forestgreen4416044}{RGB}{44,160,44}
\definecolor{lavender234234242}{RGB}{234,234,242}
\definecolor{lightgray204}{RGB}{204,204,204}
\definecolor{mediumpurple148103189}{RGB}{148,103,189}
\definecolor{sienna1408675}{RGB}{140,86,75}
\definecolor{steelblue31119180}{RGB}{31,119,180}

\begin{axis}[
axis background/.style={fill=lavender234234242},
axis line style={white},
height=5cm,
every axis legend/.code={\let\addlegendentry\relax},
%legend cell align={left},
% legend style={
%   fill opacity=0.8,
%   draw opacity=1,
%   text opacity=1,
%   at={(1.05,0.5)},
%   anchor=west,
%   draw=lightgray204,
%   fill=lavender234234242
% },
tick align=outside,
width=0.9\columnwidth,
x grid style={white},
xlabel=\textcolor{darkslategray38}{\(\displaystyle \mathrm{P}_{\mathrm{FA}}\)},
xmajorgrids,
xmajorticks=true,
xmin=-0.04979, xmax=1.04999,
xtick style={color=darkslategray38},
y grid style={white},
ylabel=\textcolor{darkslategray38}{\(\displaystyle \mathrm{P}_{\mathrm{D}}\)},
ymajorgrids,
ymajorticks=true,
ymin=-0.05, ymax=1.05,
ytick style={color=darkslategray38},
title={Scaled-Gaussian}
]
\addplot [semithick, steelblue31119180, mark=*, mark size=2, mark repeat=10, mark options={solid}]
table {%
0.0002 0.0246
0.0346 0.3928
0.0692 0.55
0.1036 0.6176
0.1382 0.6852
0.1726 0.735
0.207 0.7774
0.2416 0.81
0.276 0.8408
0.3104 0.8622
0.345 0.8832
0.3794 0.9018
0.414 0.9162
0.4484 0.9322
0.4828 0.946
0.5174 0.9562
0.5518 0.963
0.5862 0.9706
0.6208 0.9766
0.6552 0.9826
0.6898 0.9864
0.7242 0.989
0.7586 0.992
0.7932 0.9938
0.8276 0.996
0.862 0.9972
0.8966 0.9986
0.931 0.9992
0.9656 1
1 1
};
\addlegendentry{GLRT T=50}
\addplot [semithick, darkorange25512714, mark=x, mark size=2, mark repeat=10, mark options={solid}]
table {%
0.0002 0.0002
0.0346 0.0206
0.0692 0.048
0.1036 0.0724
0.1382 0.1004
0.1726 0.1252
0.207 0.1586
0.2416 0.1914
0.276 0.2196
0.3104 0.2498
0.345 0.2872
0.3794 0.314
0.414 0.3504
0.4484 0.3796
0.4828 0.4186
0.5174 0.449
0.5518 0.4854
0.5862 0.519
0.6208 0.5496
0.6552 0.5858
0.6898 0.627
0.7242 0.6672
0.7586 0.7028
0.7932 0.737
0.8276 0.7728
0.862 0.8168
0.8966 0.8574
0.931 0.8906
0.9656 0.9342
1 1
};
\addlegendentry{SGD T=2}
\addplot [semithick, forestgreen4416044, mark=square*, mark size=2, mark repeat=10, mark options={solid}]
table {%
0.0002 0
0.0346 0.025
0.0692 0.0556
0.1036 0.0872
0.1382 0.1174
0.1726 0.1508
0.207 0.1796
0.2416 0.209
0.276 0.2442
0.3104 0.2804
0.345 0.3126
0.3794 0.3476
0.414 0.3816
0.4484 0.4106
0.4828 0.4438
0.5174 0.4746
0.5518 0.5032
0.5862 0.5346
0.6208 0.5718
0.6552 0.6116
0.6898 0.6524
0.7242 0.6904
0.7586 0.7214
0.7932 0.7538
0.8276 0.794
0.862 0.8276
0.8966 0.8786
0.931 0.9194
0.9656 0.9574
1 1
};
\addlegendentry{SGD T=5}
\addplot [semithick, crimson2143940, mark=diamond*, mark size=2, mark repeat=10, mark options={solid}]
table {%
0.0002 0.0004
0.0346 0.03
0.0692 0.0634
0.1036 0.1018
0.1382 0.13
0.1726 0.1634
0.207 0.2042
0.2416 0.2442
0.276 0.2794
0.3104 0.3086
0.345 0.3436
0.3794 0.3792
0.414 0.4106
0.4484 0.4466
0.4828 0.4822
0.5174 0.5202
0.5518 0.5528
0.5862 0.5874
0.6208 0.6242
0.6552 0.6612
0.6898 0.6988
0.7242 0.7324
0.7586 0.7678
0.7932 0.7982
0.8276 0.8274
0.862 0.861
0.8966 0.8986
0.931 0.9346
0.9656 0.9642
1 1
};
\addlegendentry{SGD T=10}
\addplot [semithick, mediumpurple148103189, mark=triangle*, mark size=2, mark repeat=10, mark options={solid,rotate=180}]
table {%
0.0002 0.0006
0.0346 0.0372
0.0692 0.092
0.1036 0.1322
0.1382 0.1736
0.1726 0.2126
0.207 0.257
0.2416 0.2908
0.276 0.3346
0.3104 0.3794
0.345 0.421
0.3794 0.4612
0.414 0.5028
0.4484 0.5378
0.4828 0.5818
0.5174 0.6212
0.5518 0.6476
0.5862 0.68
0.6208 0.7122
0.6552 0.7422
0.6898 0.7834
0.7242 0.813
0.7586 0.8464
0.7932 0.8722
0.8276 0.8986
0.862 0.925
0.8966 0.9476
0.931 0.972
0.9656 0.987
1 1
};
\addlegendentry{SGD T=25}
\addplot [semithick, sienna1408675, mark=triangle*, mark size=2, mark repeat=10, mark options={solid}]
table {%
0.0002 0.0008
0.0346 0.0698
0.0692 0.1486
0.1036 0.2222
0.1382 0.2796
0.1726 0.3398
0.207 0.39
0.2416 0.4426
0.276 0.4914
0.3104 0.5388
0.345 0.5866
0.3794 0.6326
0.414 0.6788
0.4484 0.723
0.4828 0.757
0.5174 0.783
0.5518 0.813
0.5862 0.8386
0.6208 0.8652
0.6552 0.8862
0.6898 0.9092
0.7242 0.925
0.7586 0.944
0.7932 0.9604
0.8276 0.9698
0.862 0.9814
0.8966 0.9886
0.931 0.9936
0.9656 0.9988
1 1
};
\addlegendentry{SGD T=50}
\end{axis}

\end{tikzpicture}
    \end{subfigure}
    \hfill
    \begin{subfigure}[b]{0.5\textwidth}
        % This file was created with tikzplotlib v0.10.1.
\begin{tikzpicture}

\definecolor{crimson2143940}{RGB}{214,39,40}
\definecolor{darkorange25512714}{RGB}{255,127,14}
\definecolor{darkslategray38}{RGB}{38,38,38}
\definecolor{forestgreen4416044}{RGB}{44,160,44}
\definecolor{lavender234234242}{RGB}{234,234,242}
\definecolor{lightgray204}{RGB}{204,204,204}
\definecolor{mediumpurple148103189}{RGB}{148,103,189}
\definecolor{sienna1408675}{RGB}{140,86,75}
\definecolor{steelblue31119180}{RGB}{31,119,180}

\begin{axis}[
axis background/.style={fill=lavender234234242},
axis line style={white},
height=5cm,
legend cell align={left},
legend style={
  fill opacity=0.8,
  draw opacity=1,
  text opacity=1,
  at={(1.05,0.5)},
  anchor=west,
  draw=lightgray204,
  fill=lavender234234242
},
tick align=outside,
width=0.9\columnwidth,
x grid style={white},
xlabel=\textcolor{darkslategray38}{\(\displaystyle \mathrm{P}_{\mathrm{FA}}\)},
xmajorgrids,
xmajorticks=true,
xmin=-0.04979, xmax=1.04999,
xtick style={color=darkslategray38},
y grid style={white},
ymajorgrids,
ymajorticks=false,
ymin=-0.05, ymax=1.05,
ytick style={color=darkslategray38},
title={Scaled-Gaussian Kronecker}
]
\addplot [semithick, steelblue31119180, mark=*, mark size=2, mark repeat=10, mark options={solid}]
table {%
0.0002 0.9932
0.0346 1
0.0692 1
0.1036 1
0.1382 1
0.1726 1
0.207 1
0.2416 1
0.276 1
0.3104 1
0.345 1
0.3794 1
0.414 1
0.4484 1
0.4828 1
0.5174 1
0.5518 1
0.5862 1
0.6208 1
0.6552 1
0.6898 1
0.7242 1
0.7586 1
0.7932 1
0.8276 1
0.862 1
0.8966 1
0.931 1
0.9656 1
1 1
};
\addlegendentry{GLRT T=50}
\addplot [semithick, darkorange25512714, mark=x, mark size=2, mark repeat=10, mark options={solid}]
table {%
0.0002 0
0.0346 0.0432
0.0692 0.0804
0.1036 0.1194
0.1382 0.158
0.1726 0.1908
0.207 0.2316
0.2416 0.267
0.276 0.3028
0.3104 0.3428
0.345 0.3756
0.3794 0.4104
0.414 0.4446
0.4484 0.477
0.4828 0.5128
0.5174 0.547
0.5518 0.582
0.5862 0.6114
0.6208 0.6418
0.6552 0.669
0.6898 0.708
0.7242 0.7446
0.7586 0.7834
0.7932 0.8132
0.8276 0.8452
0.862 0.8714
0.8966 0.9054
0.931 0.935
0.9656 0.9678
1 1
};
\addlegendentry{SGD T=2}
\addplot [semithick, forestgreen4416044, mark=square*, mark size=2, mark repeat=10, mark options={solid}]
table {%
0.0002 0
0.0346 0.0448
0.0692 0.088
0.1036 0.1248
0.1382 0.1654
0.1726 0.2012
0.207 0.2414
0.2416 0.2772
0.276 0.3212
0.3104 0.3564
0.345 0.3942
0.3794 0.4286
0.414 0.4662
0.4484 0.5106
0.4828 0.5424
0.5174 0.5692
0.5518 0.5982
0.5862 0.6334
0.6208 0.6694
0.6552 0.7018
0.6898 0.7364
0.7242 0.7732
0.7586 0.8034
0.7932 0.8376
0.8276 0.8686
0.862 0.8956
0.8966 0.921
0.931 0.9492
0.9656 0.9746
1 1
};
\addlegendentry{SGD T=5}
\addplot [semithick, crimson2143940, mark=diamond*, mark size=2, mark repeat=10, mark options={solid}]
table {%
0.0002 0
0.0346 0.051
0.0692 0.0984
0.1036 0.1388
0.1382 0.187
0.1726 0.2236
0.207 0.2606
0.2416 0.3066
0.276 0.345
0.3104 0.3888
0.345 0.4266
0.3794 0.4718
0.414 0.512
0.4484 0.5488
0.4828 0.5826
0.5174 0.6148
0.5518 0.6478
0.5862 0.6816
0.6208 0.7124
0.6552 0.7458
0.6898 0.7824
0.7242 0.8104
0.7586 0.837
0.7932 0.867
0.8276 0.8976
0.862 0.9218
0.8966 0.9484
0.931 0.9656
0.9656 0.9872
1 1
};
\addlegendentry{SGD T=10}
\addplot [semithick, mediumpurple148103189, mark=triangle*, mark size=2, mark repeat=10, mark options={solid,rotate=180}]
table {%
0.0002 0
0.0346 0.0836
0.0692 0.1542
0.1036 0.2142
0.1382 0.2728
0.1726 0.322
0.207 0.378
0.2416 0.433
0.276 0.484
0.3104 0.5278
0.345 0.5788
0.3794 0.6284
0.414 0.6724
0.4484 0.71
0.4828 0.7424
0.5174 0.7778
0.5518 0.8122
0.5862 0.8458
0.6208 0.8708
0.6552 0.8984
0.6898 0.922
0.7242 0.9402
0.7586 0.9604
0.7932 0.9758
0.8276 0.9882
0.862 0.9934
0.8966 0.997
0.931 0.9994
0.9656 1
1 1
};
\addlegendentry{SGD T=25}
\addplot [semithick, sienna1408675, mark=triangle*, mark size=2, mark repeat=10, mark options={solid}]
table {%
0.0002 0.0004
0.0346 0.193
0.0692 0.336
0.1036 0.449
0.1382 0.5742
0.1726 0.656
0.207 0.7262
0.2416 0.801
0.276 0.8564
0.3104 0.9008
0.345 0.9344
0.3794 0.9568
0.414 0.9706
0.4484 0.9816
0.4828 0.9884
0.5174 0.9938
0.5518 0.9972
0.5862 0.9982
0.6208 0.9996
0.6552 1
0.6898 1
0.7242 1
0.7586 1
0.7932 1
0.8276 1
0.862 1
0.8966 1
0.931 1
0.9656 1
1 1
};
\addlegendentry{SGD T=50}
\end{axis}

\end{tikzpicture}
    \end{subfigure}
    \caption{Gaussian data with $a=3$, $b=4$, $n=13$. For $\mathbf{A}$: $\rho_0 = 0.3+0.7j$, $\rho_1=0.3+0.5j$. For $\mathbf{B}$: $\rho_0 = 0.3+0.6j$, $\rho_1=0.4+0.5j$. 5000 trials.}
    \label{fig:ROC_Gaussian}
\end{figure}

\begin{figure}[h]
    \centering
    \begin{subfigure}[b]{0.45\textwidth}
        \input{figures/roc-simulated/pseudo-gaussian/Scaled_Gaussian_GLRT_vs_Scaled_Gaussian_SGD}
    \end{subfigure}
    \hfill
    \begin{subfigure}[b]{0.5\textwidth}
        % This file was created with tikzplotlib v0.10.1.
\begin{tikzpicture}

\definecolor{crimson2143940}{RGB}{214,39,40}
\definecolor{darkorange25512714}{RGB}{255,127,14}
\definecolor{darkslategray38}{RGB}{38,38,38}
\definecolor{forestgreen4416044}{RGB}{44,160,44}
\definecolor{lavender234234242}{RGB}{234,234,242}
\definecolor{lightgray204}{RGB}{204,204,204}
\definecolor{mediumpurple148103189}{RGB}{148,103,189}
\definecolor{sienna1408675}{RGB}{140,86,75}
\definecolor{steelblue31119180}{RGB}{31,119,180}

\begin{axis}[
axis background/.style={fill=lavender234234242},
axis line style={white},
height=5cm,
legend cell align={left},
legend style={
  fill opacity=0.8,
  draw opacity=1,
  text opacity=1,
  at={(1.05,0.5)},
  anchor=west,
  draw=lightgray204,
  fill=lavender234234242
},
tick align=outside,
width=0.9\columnwidth,
x grid style={white},
xlabel=\textcolor{darkslategray38}{\(\displaystyle \mathrm{P}_{\mathrm{FA}}\)},
xmajorgrids,
xmajorticks=true,
xmin=-0.04979, xmax=1.04999,
xtick style={color=darkslategray38},
y grid style={white},
ymajorgrids,
ymajorticks=false,
ymin=-0.05, ymax=1.05,
ytick style={color=darkslategray38},
title={Scaled-Gaussian Kronecker}
]
\addplot [semithick, steelblue31119180, mark=*, mark size=2, mark repeat=10, mark options={solid}]
table {%
0.0002 0.999
0.0102 1
0.0204 1
0.0304 1
0.0406 1
0.0506 1
0.0608 1
0.0708 1
0.081 1
0.091 1
0.1012 1
0.1112 1
0.1214 1
0.1314 1
0.1416 1
0.1516 1
0.1618 1
0.1718 1
0.182 1
0.192 1
0.2022 1
0.2122 1
0.2224 1
0.2324 1
0.2426 1
0.2526 1
0.2628 1
0.2728 1
0.283 1
0.293 1
0.3032 1
0.3132 1
0.3234 1
0.3334 1
0.3436 1
0.3536 1
0.3638 1
0.3738 1
0.384 1
0.394 1
0.4042 1
0.4142 1
0.4244 1
0.4344 1
0.4446 1
0.4546 1
0.4648 1
0.4748 1
0.485 1
0.495 1
0.5052 1
0.5152 1
0.5254 1
0.5354 1
0.5456 1
0.5556 1
0.5658 1
0.5758 1
0.586 1
0.596 1
0.6062 1
0.6162 1
0.6264 1
0.6364 1
0.6466 1
0.6566 1
0.6668 1
0.6768 1
0.687 1
0.697 1
0.7072 1
0.7172 1
0.7274 1
0.7374 1
0.7476 1
0.7576 1
0.7678 1
0.7778 1
0.788 1
0.798 1
0.8082 1
0.8182 1
0.8284 1
0.8384 1
0.8486 1
0.8586 1
0.8688 1
0.8788 1
0.889 1
0.899 1
0.9092 1
0.9192 1
0.9294 1
0.9394 1
0.9496 1
0.9596 1
0.9698 1
0.9798 1
0.99 1
1 1
};
\addlegendentry{GLRT T=50}
\addplot [semithick, darkorange25512714, mark=x, mark size=2, mark repeat=10, mark options={solid}]
table {%
0.0002 0.0004
0.0102 0.0194
0.0204 0.0288
0.0304 0.0412
0.0406 0.0602
0.0506 0.073
0.0608 0.0852
0.0708 0.0966
0.081 0.107
0.091 0.115
0.1012 0.1252
0.1112 0.1374
0.1214 0.146
0.1314 0.1538
0.1416 0.1648
0.1516 0.1754
0.1618 0.1866
0.1718 0.1994
0.182 0.21
0.192 0.2226
0.2022 0.2322
0.2122 0.2418
0.2224 0.2518
0.2324 0.263
0.2426 0.2722
0.2526 0.2812
0.2628 0.2958
0.2728 0.3068
0.283 0.3172
0.293 0.3274
0.3032 0.3354
0.3132 0.3456
0.3234 0.3556
0.3334 0.367
0.3436 0.3754
0.3536 0.3876
0.3638 0.3946
0.3738 0.4036
0.384 0.414
0.394 0.4272
0.4042 0.4378
0.4142 0.449
0.4244 0.4632
0.4344 0.4724
0.4446 0.48
0.4546 0.4888
0.4648 0.4986
0.4748 0.508
0.485 0.5178
0.495 0.5262
0.5052 0.537
0.5152 0.5448
0.5254 0.5544
0.5354 0.563
0.5456 0.5762
0.5556 0.5876
0.5658 0.5988
0.5758 0.6076
0.586 0.6186
0.596 0.628
0.6062 0.6376
0.6162 0.6522
0.6264 0.6614
0.6364 0.6708
0.6466 0.682
0.6566 0.692
0.6668 0.7022
0.6768 0.7112
0.687 0.7184
0.697 0.7298
0.7072 0.74
0.7172 0.7482
0.7274 0.7566
0.7374 0.7682
0.7476 0.7766
0.7576 0.7834
0.7678 0.7944
0.7778 0.8032
0.788 0.8122
0.798 0.8232
0.8082 0.8324
0.8182 0.8386
0.8284 0.8468
0.8384 0.8578
0.8486 0.8674
0.8586 0.8768
0.8688 0.8848
0.8788 0.8934
0.889 0.902
0.899 0.9128
0.9092 0.9206
0.9192 0.9312
0.9294 0.939
0.9394 0.9512
0.9496 0.9606
0.9596 0.9682
0.9698 0.9772
0.9798 0.9864
0.99 0.9916
1 1
};
\addlegendentry{SGD T=2}
\addplot [semithick, forestgreen4416044, mark=square*, mark size=2, mark repeat=10, mark options={solid}]
table {%
0.0002 0
0.0102 0.0156
0.0204 0.0324
0.0304 0.0466
0.0406 0.061
0.0506 0.0768
0.0608 0.0884
0.0708 0.0972
0.081 0.1102
0.091 0.1212
0.1012 0.1318
0.1112 0.1436
0.1214 0.156
0.1314 0.1678
0.1416 0.177
0.1516 0.1892
0.1618 0.2008
0.1718 0.2108
0.182 0.2236
0.192 0.2346
0.2022 0.2488
0.2122 0.2618
0.2224 0.2732
0.2324 0.285
0.2426 0.294
0.2526 0.3028
0.2628 0.3144
0.2728 0.324
0.283 0.337
0.293 0.3466
0.3032 0.3582
0.3132 0.3708
0.3234 0.3822
0.3334 0.3908
0.3436 0.4006
0.3536 0.4104
0.3638 0.416
0.3738 0.4278
0.384 0.4368
0.394 0.4474
0.4042 0.4604
0.4142 0.4706
0.4244 0.482
0.4344 0.4926
0.4446 0.4994
0.4546 0.5086
0.4648 0.5212
0.4748 0.5306
0.485 0.539
0.495 0.5486
0.5052 0.559
0.5152 0.572
0.5254 0.583
0.5354 0.5954
0.5456 0.607
0.5556 0.6178
0.5658 0.6282
0.5758 0.6356
0.586 0.649
0.596 0.661
0.6062 0.6726
0.6162 0.6802
0.6264 0.69
0.6364 0.7006
0.6466 0.7084
0.6566 0.7186
0.6668 0.7292
0.6768 0.7386
0.687 0.7506
0.697 0.7576
0.7072 0.765
0.7172 0.7708
0.7274 0.7812
0.7374 0.7912
0.7476 0.7982
0.7576 0.8112
0.7678 0.822
0.7778 0.8318
0.788 0.8394
0.798 0.848
0.8082 0.856
0.8182 0.8634
0.8284 0.874
0.8384 0.8836
0.8486 0.892
0.8586 0.8994
0.8688 0.905
0.8788 0.9138
0.889 0.9222
0.899 0.9294
0.9092 0.939
0.9192 0.9458
0.9294 0.9522
0.9394 0.9602
0.9496 0.9694
0.9596 0.9776
0.9698 0.9834
0.9798 0.9908
0.99 0.9946
1 1
};
\addlegendentry{SGD T=5}
\addplot [semithick, crimson2143940, mark=diamond*, mark size=2, mark repeat=10, mark options={solid}]
table {%
0.0002 0
0.0102 0.0182
0.0204 0.0348
0.0304 0.051
0.0406 0.0684
0.0506 0.0822
0.0608 0.099
0.0708 0.1094
0.081 0.1244
0.091 0.1358
0.1012 0.1512
0.1112 0.1646
0.1214 0.1756
0.1314 0.1878
0.1416 0.2012
0.1516 0.212
0.1618 0.2242
0.1718 0.2326
0.182 0.2468
0.192 0.2588
0.2022 0.272
0.2122 0.2824
0.2224 0.2936
0.2324 0.306
0.2426 0.3206
0.2526 0.3368
0.2628 0.3494
0.2728 0.3588
0.283 0.3676
0.293 0.377
0.3032 0.3852
0.3132 0.399
0.3234 0.4106
0.3334 0.4212
0.3436 0.4298
0.3536 0.442
0.3638 0.4532
0.3738 0.461
0.384 0.474
0.394 0.4874
0.4042 0.4948
0.4142 0.5058
0.4244 0.5188
0.4344 0.5344
0.4446 0.5438
0.4546 0.5544
0.4648 0.5638
0.4748 0.5752
0.485 0.5878
0.495 0.5976
0.5052 0.6096
0.5152 0.6212
0.5254 0.6312
0.5354 0.6418
0.5456 0.6544
0.5556 0.6658
0.5658 0.6776
0.5758 0.6854
0.586 0.696
0.596 0.704
0.6062 0.714
0.6162 0.7232
0.6264 0.7358
0.6364 0.7466
0.6466 0.7578
0.6566 0.7674
0.6668 0.7758
0.6768 0.7838
0.687 0.7928
0.697 0.805
0.7072 0.8142
0.7172 0.8218
0.7274 0.8278
0.7374 0.8352
0.7476 0.8416
0.7576 0.85
0.7678 0.863
0.7778 0.8718
0.788 0.8786
0.798 0.8874
0.8082 0.8936
0.8182 0.9018
0.8284 0.9118
0.8384 0.9188
0.8486 0.9266
0.8586 0.9346
0.8688 0.9406
0.8788 0.9468
0.889 0.9536
0.899 0.9612
0.9092 0.966
0.9192 0.9724
0.9294 0.976
0.9394 0.9812
0.9496 0.9854
0.9596 0.9886
0.9698 0.9934
0.9798 0.996
0.99 0.9986
1 1
};
\addlegendentry{SGD T=10}
\addplot [semithick, mediumpurple148103189, mark=triangle*, mark size=2, mark repeat=10, mark options={solid,rotate=180}]
table {%
0.0002 0.0002
0.0102 0.0286
0.0204 0.0584
0.0304 0.0834
0.0406 0.1088
0.0506 0.1286
0.0608 0.1532
0.0708 0.175
0.081 0.193
0.091 0.2122
0.1012 0.229
0.1112 0.2522
0.1214 0.2726
0.1314 0.2862
0.1416 0.3016
0.1516 0.3206
0.1618 0.3356
0.1718 0.3512
0.182 0.3686
0.192 0.3832
0.2022 0.4008
0.2122 0.4208
0.2224 0.4356
0.2324 0.452
0.2426 0.4682
0.2526 0.4806
0.2628 0.4944
0.2728 0.503
0.283 0.5182
0.293 0.5326
0.3032 0.548
0.3132 0.5654
0.3234 0.5832
0.3334 0.5962
0.3436 0.6122
0.3536 0.6294
0.3638 0.6468
0.3738 0.6588
0.384 0.67
0.394 0.6802
0.4042 0.6932
0.4142 0.7038
0.4244 0.7164
0.4344 0.726
0.4446 0.7336
0.4546 0.7462
0.4648 0.7566
0.4748 0.7672
0.485 0.7772
0.495 0.787
0.5052 0.7988
0.5152 0.8122
0.5254 0.8212
0.5354 0.8342
0.5456 0.8442
0.5556 0.8542
0.5658 0.8606
0.5758 0.8718
0.586 0.8802
0.596 0.8856
0.6062 0.897
0.6162 0.9048
0.6264 0.911
0.6364 0.9174
0.6466 0.9222
0.6566 0.9292
0.6668 0.9358
0.6768 0.9402
0.687 0.9442
0.697 0.9482
0.7072 0.9546
0.7172 0.9618
0.7274 0.9656
0.7374 0.9682
0.7476 0.971
0.7576 0.9738
0.7678 0.977
0.7778 0.98
0.788 0.982
0.798 0.9838
0.8082 0.9858
0.8182 0.988
0.8284 0.99
0.8384 0.9926
0.8486 0.9934
0.8586 0.9952
0.8688 0.9958
0.8788 0.997
0.889 0.9978
0.899 0.9984
0.9092 0.9994
0.9192 0.9998
0.9294 0.9998
0.9394 1
0.9496 1
0.9596 1
0.9698 1
0.9798 1
0.99 1
1 1
};
\addlegendentry{SGD T=25}
\addplot [semithick, sienna1408675, mark=triangle*, mark size=2, mark repeat=10, mark options={solid}]
table {%
0.0002 0.0002
0.0102 0.0748
0.0204 0.1292
0.0304 0.2028
0.0406 0.2544
0.0506 0.3018
0.0608 0.3378
0.0708 0.3896
0.081 0.4378
0.091 0.4832
0.1012 0.5308
0.1112 0.5692
0.1214 0.6032
0.1314 0.6346
0.1416 0.6718
0.1516 0.7008
0.1618 0.7312
0.1718 0.753
0.182 0.7794
0.192 0.7968
0.2022 0.8164
0.2122 0.83
0.2224 0.8462
0.2324 0.867
0.2426 0.8818
0.2526 0.894
0.2628 0.9038
0.2728 0.9112
0.283 0.9198
0.293 0.9292
0.3032 0.9392
0.3132 0.9478
0.3234 0.9544
0.3334 0.9626
0.3436 0.9672
0.3536 0.9722
0.3638 0.9778
0.3738 0.9812
0.384 0.9832
0.394 0.9874
0.4042 0.9886
0.4142 0.9902
0.4244 0.9914
0.4344 0.9928
0.4446 0.9938
0.4546 0.9948
0.4648 0.9958
0.4748 0.9966
0.485 0.9974
0.495 0.9982
0.5052 0.9986
0.5152 0.999
0.5254 0.9994
0.5354 0.9998
0.5456 0.9998
0.5556 0.9998
0.5658 1
0.5758 1
0.586 1
0.596 1
0.6062 1
0.6162 1
0.6264 1
0.6364 1
0.6466 1
0.6566 1
0.6668 1
0.6768 1
0.687 1
0.697 1
0.7072 1
0.7172 1
0.7274 1
0.7374 1
0.7476 1
0.7576 1
0.7678 1
0.7778 1
0.788 1
0.798 1
0.8082 1
0.8182 1
0.8284 1
0.8384 1
0.8486 1
0.8586 1
0.8688 1
0.8788 1
0.889 1
0.899 1
0.9092 1
0.9192 1
0.9294 1
0.9394 1
0.9496 1
0.9596 1
0.9698 1
0.9798 1
0.99 1
1 1
};
\addlegendentry{SGD T=50}
\end{axis}

\end{tikzpicture}
    \end{subfigure}
    \caption{Non-Gaussian data with $a=3$, $b=4$, $n=13$, $\nu=100$. For $\mathbf{A}$: $\rho_0 = 0.3+0.7j$, $\rho_1=0.3+0.5j$. For $\mathbf{B}$: $\rho_0 = 0.3+0.6j$, $\rho_1=0.4+0.5j$. 5000 trials.}
    \label{fig:ROC_nu100}
\end{figure}

\begin{figure}[h]
    \centering
    \begin{subfigure}[b]{0.45\textwidth}
        % This file was created with tikzplotlib v0.10.1.
\begin{tikzpicture}

\definecolor{crimson2143940}{RGB}{214,39,40}
\definecolor{darkorange25512714}{RGB}{255,127,14}
\definecolor{darkslategray38}{RGB}{38,38,38}
\definecolor{forestgreen4416044}{RGB}{44,160,44}
\definecolor{lavender234234242}{RGB}{234,234,242}
\definecolor{lightgray204}{RGB}{204,204,204}
\definecolor{mediumpurple148103189}{RGB}{148,103,189}
\definecolor{sienna1408675}{RGB}{140,86,75}
\definecolor{steelblue31119180}{RGB}{31,119,180}

\begin{axis}[
axis background/.style={fill=lavender234234242},
axis line style={white},
height=5cm,
every axis legend/.code={\let\addlegendentry\relax},
%legend cell align={left},
% legend style={
%   fill opacity=0.8,
%   draw opacity=1,
%   text opacity=1,
%   at={(1.05,0.5)},
%   anchor=west,
%   draw=lightgray204,
%   fill=lavender234234242
% },
tick align=outside,
width=1.01\columnwidth,
x grid style={white},
xlabel=\textcolor{darkslategray38}{\(\displaystyle \mathrm{P}_{\mathrm{FA}}\)},
xmajorgrids,
xmajorticks=true,
xmin=-0.04979, xmax=1.04999,
xtick style={color=darkslategray38},
y grid style={white},
ylabel=\textcolor{darkslategray38}{\(\displaystyle \mathrm{P}_{\mathrm{D}}\)},
ymajorgrids,
ymajorticks=true,
ymin=-0.05, ymax=1.05,
ytick style={color=darkslategray38},
title={Scaled-Gaussian}
]
\addplot [semithick, steelblue31119180, mark=*, mark size=2, mark repeat=10, mark options={solid}]
table {%
0.0002 0.9998
0.0102 1
0.0204 1
0.0304 1
0.0406 1
0.0506 1
0.0608 1
0.0708 1
0.081 1
0.091 1
0.1012 1
0.1112 1
0.1214 1
0.1314 1
0.1416 1
0.1516 1
0.1618 1
0.1718 1
0.182 1
0.192 1
0.2022 1
0.2122 1
0.2224 1
0.2324 1
0.2426 1
0.2526 1
0.2628 1
0.2728 1
0.283 1
0.293 1
0.3032 1
0.3132 1
0.3234 1
0.3334 1
0.3436 1
0.3536 1
0.3638 1
0.3738 1
0.384 1
0.394 1
0.4042 1
0.4142 1
0.4244 1
0.4344 1
0.4446 1
0.4546 1
0.4648 1
0.4748 1
0.485 1
0.495 1
0.5052 1
0.5152 1
0.5254 1
0.5354 1
0.5456 1
0.5556 1
0.5658 1
0.5758 1
0.586 1
0.596 1
0.6062 1
0.6162 1
0.6264 1
0.6364 1
0.6466 1
0.6566 1
0.6668 1
0.6768 1
0.687 1
0.697 1
0.7072 1
0.7172 1
0.7274 1
0.7374 1
0.7476 1
0.7576 1
0.7678 1
0.7778 1
0.788 1
0.798 1
0.8082 1
0.8182 1
0.8284 1
0.8384 1
0.8486 1
0.8586 1
0.8688 1
0.8788 1
0.889 1
0.899 1
0.9092 1
0.9192 1
0.9294 1
0.9394 1
0.9496 1
0.9596 1
0.9698 1
0.9798 1
0.99 1
1 1
};
\addlegendentry{GLRT T=50}
\addplot [semithick, darkorange25512714, mark=x, mark size=2, mark repeat=10, mark options={solid}]
table {%
0.0002 0.0094
0.0102 0.1582
0.0204 0.2308
0.0304 0.2768
0.0406 0.3232
0.0506 0.352
0.0608 0.3826
0.0708 0.4082
0.081 0.4274
0.091 0.447
0.1012 0.4648
0.1112 0.481
0.1214 0.494
0.1314 0.5054
0.1416 0.5264
0.1516 0.5388
0.1618 0.556
0.1718 0.5698
0.182 0.5802
0.192 0.5938
0.2022 0.6026
0.2122 0.6158
0.2224 0.628
0.2324 0.6372
0.2426 0.6492
0.2526 0.6584
0.2628 0.6668
0.2728 0.6758
0.283 0.6828
0.293 0.6886
0.3032 0.6942
0.3132 0.7022
0.3234 0.7122
0.3334 0.722
0.3436 0.7308
0.3536 0.7388
0.3638 0.7442
0.3738 0.752
0.384 0.756
0.394 0.7612
0.4042 0.7682
0.4142 0.7764
0.4244 0.783
0.4344 0.788
0.4446 0.7938
0.4546 0.802
0.4648 0.807
0.4748 0.8152
0.485 0.8226
0.495 0.8288
0.5052 0.8338
0.5152 0.8406
0.5254 0.8478
0.5354 0.8534
0.5456 0.8606
0.5556 0.8644
0.5658 0.8666
0.5758 0.8716
0.586 0.8756
0.596 0.8792
0.6062 0.8834
0.6162 0.8862
0.6264 0.889
0.6364 0.894
0.6466 0.8986
0.6566 0.9026
0.6668 0.9068
0.6768 0.9114
0.687 0.917
0.697 0.92
0.7072 0.9226
0.7172 0.9262
0.7274 0.9294
0.7374 0.9332
0.7476 0.9352
0.7576 0.9396
0.7678 0.9436
0.7778 0.9466
0.788 0.9498
0.798 0.9532
0.8082 0.957
0.8182 0.9592
0.8284 0.9632
0.8384 0.9668
0.8486 0.9688
0.8586 0.9728
0.8688 0.9754
0.8788 0.978
0.889 0.981
0.899 0.9832
0.9092 0.9846
0.9192 0.986
0.9294 0.9868
0.9394 0.9894
0.9496 0.9916
0.9596 0.9944
0.9698 0.995
0.9798 0.9966
0.99 0.9986
1 1
};
\addlegendentry{SGD T=2}
\addplot [semithick, forestgreen4416044, mark=square*, mark size=2, mark repeat=10, mark options={solid}]
table {%
0.0002 0.0034
0.0102 0.1582
0.0204 0.2536
0.0304 0.3146
0.0406 0.36
0.0506 0.3884
0.0608 0.4234
0.0708 0.4534
0.081 0.4786
0.091 0.5028
0.1012 0.5204
0.1112 0.5406
0.1214 0.5652
0.1314 0.5852
0.1416 0.6006
0.1516 0.612
0.1618 0.6246
0.1718 0.6406
0.182 0.6542
0.192 0.6668
0.2022 0.6802
0.2122 0.6944
0.2224 0.704
0.2324 0.7148
0.2426 0.7266
0.2526 0.7366
0.2628 0.743
0.2728 0.7518
0.283 0.7604
0.293 0.7674
0.3032 0.7732
0.3132 0.7818
0.3234 0.7904
0.3334 0.7976
0.3436 0.8032
0.3536 0.8088
0.3638 0.815
0.3738 0.8238
0.384 0.8304
0.394 0.8366
0.4042 0.8434
0.4142 0.8494
0.4244 0.8536
0.4344 0.8594
0.4446 0.8636
0.4546 0.871
0.4648 0.8776
0.4748 0.8828
0.485 0.8858
0.495 0.889
0.5052 0.8916
0.5152 0.8946
0.5254 0.8992
0.5354 0.9032
0.5456 0.9068
0.5556 0.9108
0.5658 0.9138
0.5758 0.917
0.586 0.921
0.596 0.9268
0.6062 0.9284
0.6162 0.931
0.6264 0.9344
0.6364 0.9384
0.6466 0.9428
0.6566 0.9464
0.6668 0.9492
0.6768 0.951
0.687 0.9534
0.697 0.9556
0.7072 0.9582
0.7172 0.9598
0.7274 0.962
0.7374 0.9636
0.7476 0.9648
0.7576 0.9664
0.7678 0.9686
0.7778 0.9712
0.788 0.9732
0.798 0.9738
0.8082 0.9746
0.8182 0.9766
0.8284 0.9782
0.8384 0.9808
0.8486 0.9836
0.8586 0.9844
0.8688 0.9852
0.8788 0.9864
0.889 0.9876
0.899 0.989
0.9092 0.9902
0.9192 0.992
0.9294 0.9932
0.9394 0.9952
0.9496 0.9962
0.9596 0.9976
0.9698 0.999
0.9798 1
0.99 1
1 1
};
\addlegendentry{SGD T=5}
\addplot [semithick, crimson2143940, mark=diamond*, mark size=2, mark repeat=10, mark options={solid}]
table {%
0.0002 0.0036
0.0102 0.2364
0.0204 0.346
0.0304 0.4126
0.0406 0.4758
0.0506 0.5112
0.0608 0.5454
0.0708 0.579
0.081 0.6088
0.091 0.6318
0.1012 0.6556
0.1112 0.6856
0.1214 0.7002
0.1314 0.716
0.1416 0.7304
0.1516 0.7478
0.1618 0.758
0.1718 0.7702
0.182 0.7788
0.192 0.79
0.2022 0.8044
0.2122 0.8154
0.2224 0.8252
0.2324 0.8354
0.2426 0.8452
0.2526 0.8506
0.2628 0.8588
0.2728 0.867
0.283 0.8716
0.293 0.8786
0.3032 0.8854
0.3132 0.8894
0.3234 0.8958
0.3334 0.8996
0.3436 0.903
0.3536 0.9064
0.3638 0.9108
0.3738 0.9154
0.384 0.9202
0.394 0.9238
0.4042 0.9272
0.4142 0.931
0.4244 0.9346
0.4344 0.9382
0.4446 0.9414
0.4546 0.944
0.4648 0.9462
0.4748 0.9488
0.485 0.9518
0.495 0.9538
0.5052 0.9556
0.5152 0.9574
0.5254 0.9594
0.5354 0.9616
0.5456 0.9642
0.5556 0.9656
0.5658 0.967
0.5758 0.9696
0.586 0.971
0.596 0.9724
0.6062 0.974
0.6162 0.9766
0.6264 0.9784
0.6364 0.98
0.6466 0.9808
0.6566 0.983
0.6668 0.984
0.6768 0.9846
0.687 0.9864
0.697 0.9884
0.7072 0.9888
0.7172 0.9896
0.7274 0.9896
0.7374 0.9902
0.7476 0.9918
0.7576 0.9928
0.7678 0.9938
0.7778 0.9938
0.788 0.9938
0.798 0.9938
0.8082 0.9944
0.8182 0.995
0.8284 0.9954
0.8384 0.9962
0.8486 0.9968
0.8586 0.9974
0.8688 0.9974
0.8788 0.9978
0.889 0.9982
0.899 0.9988
0.9092 0.999
0.9192 0.9996
0.9294 0.9996
0.9394 0.9996
0.9496 0.9998
0.9596 0.9998
0.9698 0.9998
0.9798 1
0.99 1
1 1
};
\addlegendentry{SGD T=10}
\addplot [semithick, mediumpurple148103189, mark=triangle*, mark size=2, mark repeat=10, mark options={solid,rotate=180}]
table {%
0.0002 0.0266
0.0102 0.5178
0.0204 0.6562
0.0304 0.7368
0.0406 0.7832
0.0506 0.8246
0.0608 0.8494
0.0708 0.8726
0.081 0.889
0.091 0.9026
0.1012 0.9142
0.1112 0.9228
0.1214 0.9306
0.1314 0.9376
0.1416 0.9436
0.1516 0.9492
0.1618 0.9538
0.1718 0.957
0.182 0.9616
0.192 0.9646
0.2022 0.9662
0.2122 0.9692
0.2224 0.9728
0.2324 0.9756
0.2426 0.978
0.2526 0.9794
0.2628 0.9814
0.2728 0.9836
0.283 0.9848
0.293 0.9868
0.3032 0.9874
0.3132 0.9884
0.3234 0.9892
0.3334 0.9902
0.3436 0.9908
0.3536 0.9908
0.3638 0.9914
0.3738 0.9922
0.384 0.9928
0.394 0.9944
0.4042 0.9952
0.4142 0.9954
0.4244 0.9958
0.4344 0.996
0.4446 0.9964
0.4546 0.9966
0.4648 0.9968
0.4748 0.997
0.485 0.997
0.495 0.9976
0.5052 0.9976
0.5152 0.9978
0.5254 0.9982
0.5354 0.9982
0.5456 0.9984
0.5556 0.9986
0.5658 0.999
0.5758 0.999
0.586 0.999
0.596 0.9994
0.6062 0.9994
0.6162 0.9998
0.6264 0.9998
0.6364 0.9998
0.6466 0.9998
0.6566 0.9998
0.6668 0.9998
0.6768 0.9998
0.687 1
0.697 1
0.7072 1
0.7172 1
0.7274 1
0.7374 1
0.7476 1
0.7576 1
0.7678 1
0.7778 1
0.788 1
0.798 1
0.8082 1
0.8182 1
0.8284 1
0.8384 1
0.8486 1
0.8586 1
0.8688 1
0.8788 1
0.889 1
0.899 1
0.9092 1
0.9192 1
0.9294 1
0.9394 1
0.9496 1
0.9596 1
0.9698 1
0.9798 1
0.99 1
1 1
};
\addlegendentry{SGD T=25}
\addplot [semithick, sienna1408675, mark=triangle*, mark size=2, mark repeat=10, mark options={solid}]
table {%
0.0002 0.1248
0.0102 0.8158
0.0204 0.9078
0.0304 0.9426
0.0406 0.9646
0.0506 0.9738
0.0608 0.98
0.0708 0.9862
0.081 0.9884
0.091 0.9924
0.1012 0.993
0.1112 0.9942
0.1214 0.9954
0.1314 0.9956
0.1416 0.9962
0.1516 0.9966
0.1618 0.997
0.1718 0.9974
0.182 0.9976
0.192 0.9978
0.2022 0.9984
0.2122 0.9988
0.2224 0.999
0.2324 0.999
0.2426 0.999
0.2526 0.999
0.2628 0.9996
0.2728 0.9996
0.283 0.9996
0.293 0.9996
0.3032 0.9996
0.3132 0.9996
0.3234 0.9996
0.3334 0.9996
0.3436 0.9996
0.3536 0.9996
0.3638 0.9998
0.3738 0.9998
0.384 0.9998
0.394 0.9998
0.4042 1
0.4142 1
0.4244 1
0.4344 1
0.4446 1
0.4546 1
0.4648 1
0.4748 1
0.485 1
0.495 1
0.5052 1
0.5152 1
0.5254 1
0.5354 1
0.5456 1
0.5556 1
0.5658 1
0.5758 1
0.586 1
0.596 1
0.6062 1
0.6162 1
0.6264 1
0.6364 1
0.6466 1
0.6566 1
0.6668 1
0.6768 1
0.687 1
0.697 1
0.7072 1
0.7172 1
0.7274 1
0.7374 1
0.7476 1
0.7576 1
0.7678 1
0.7778 1
0.788 1
0.798 1
0.8082 1
0.8182 1
0.8284 1
0.8384 1
0.8486 1
0.8586 1
0.8688 1
0.8788 1
0.889 1
0.899 1
0.9092 1
0.9192 1
0.9294 1
0.9394 1
0.9496 1
0.9596 1
0.9698 1
0.9798 1
0.99 1
1 1
};
\addlegendentry{SGD T=50}
\end{axis}

\end{tikzpicture}
    \end{subfigure}
    \hfill
    \begin{subfigure}[b]{0.5\textwidth}
        % This file was created with tikzplotlib v0.10.1.
\begin{tikzpicture}

\definecolor{crimson2143940}{RGB}{214,39,40}
\definecolor{darkorange25512714}{RGB}{255,127,14}
\definecolor{darkslategray38}{RGB}{38,38,38}
\definecolor{forestgreen4416044}{RGB}{44,160,44}
\definecolor{lavender234234242}{RGB}{234,234,242}
\definecolor{lightgray204}{RGB}{204,204,204}
\definecolor{mediumpurple148103189}{RGB}{148,103,189}
\definecolor{sienna1408675}{RGB}{140,86,75}
\definecolor{steelblue31119180}{RGB}{31,119,180}

\begin{axis}[
axis background/.style={fill=lavender234234242},
axis line style={white},
height=5cm,
legend cell align={left},
legend style={
  fill opacity=0.8,
  draw opacity=1,
  text opacity=1,
  at={(1.05,0.5)},
  anchor=west,
  draw=lightgray204,
  fill=lavender234234242
},
tick align=outside,
width=0.9\columnwidth,
x grid style={white},
xlabel=\textcolor{darkslategray38}{\(\displaystyle \mathrm{P}_{\mathrm{FA}}\)},
xmajorgrids,
xmajorticks=true,
xmin=-0.04979, xmax=1.04999,
xtick style={color=darkslategray38},
y grid style={white},
ymajorgrids,
ymajorticks=false,
ymin=-0.05, ymax=1.05,
ytick style={color=darkslategray38},
title={Scaled-Gaussian Kronecker}
]
\addplot [semithick, steelblue31119180, mark=*, mark size=2, mark repeat=10, mark options={solid}]
table {%
0.0002 1
0.0102 1
0.0204 1
0.0304 1
0.0406 1
0.0506 1
0.0608 1
0.0708 1
0.081 1
0.091 1
0.1012 1
0.1112 1
0.1214 1
0.1314 1
0.1416 1
0.1516 1
0.1618 1
0.1718 1
0.182 1
0.192 1
0.2022 1
0.2122 1
0.2224 1
0.2324 1
0.2426 1
0.2526 1
0.2628 1
0.2728 1
0.283 1
0.293 1
0.3032 1
0.3132 1
0.3234 1
0.3334 1
0.3436 1
0.3536 1
0.3638 1
0.3738 1
0.384 1
0.394 1
0.4042 1
0.4142 1
0.4244 1
0.4344 1
0.4446 1
0.4546 1
0.4648 1
0.4748 1
0.485 1
0.495 1
0.5052 1
0.5152 1
0.5254 1
0.5354 1
0.5456 1
0.5556 1
0.5658 1
0.5758 1
0.586 1
0.596 1
0.6062 1
0.6162 1
0.6264 1
0.6364 1
0.6466 1
0.6566 1
0.6668 1
0.6768 1
0.687 1
0.697 1
0.7072 1
0.7172 1
0.7274 1
0.7374 1
0.7476 1
0.7576 1
0.7678 1
0.7778 1
0.788 1
0.798 1
0.8082 1
0.8182 1
0.8284 1
0.8384 1
0.8486 1
0.8586 1
0.8688 1
0.8788 1
0.889 1
0.899 1
0.9092 1
0.9192 1
0.9294 1
0.9394 1
0.9496 1
0.9596 1
0.9698 1
0.9798 1
0.99 1
1 1
};
\addlegendentry{GLRT T=50}
\addplot [semithick, darkorange25512714, mark=x, mark size=2, mark repeat=10, mark options={solid}]
table {%
0.0002 0.0426
0.0102 0.2326
0.0204 0.2956
0.0304 0.3368
0.0406 0.3682
0.0506 0.3998
0.0608 0.4226
0.0708 0.4462
0.081 0.4736
0.091 0.493
0.1012 0.5104
0.1112 0.5286
0.1214 0.5422
0.1314 0.5592
0.1416 0.5766
0.1516 0.5882
0.1618 0.6032
0.1718 0.618
0.182 0.6284
0.192 0.6428
0.2022 0.6524
0.2122 0.6606
0.2224 0.6708
0.2324 0.679
0.2426 0.6864
0.2526 0.6972
0.2628 0.7038
0.2728 0.7126
0.283 0.719
0.293 0.7246
0.3032 0.731
0.3132 0.7392
0.3234 0.7458
0.3334 0.7528
0.3436 0.7584
0.3536 0.7644
0.3638 0.7716
0.3738 0.778
0.384 0.7838
0.394 0.7914
0.4042 0.7976
0.4142 0.8058
0.4244 0.8098
0.4344 0.8152
0.4446 0.8222
0.4546 0.8296
0.4648 0.8328
0.4748 0.8398
0.485 0.8452
0.495 0.8492
0.5052 0.8532
0.5152 0.8594
0.5254 0.8646
0.5354 0.8696
0.5456 0.8746
0.5556 0.8784
0.5658 0.8826
0.5758 0.888
0.586 0.8922
0.596 0.8954
0.6062 0.899
0.6162 0.9046
0.6264 0.9086
0.6364 0.9138
0.6466 0.9168
0.6566 0.9214
0.6668 0.9248
0.6768 0.9288
0.687 0.932
0.697 0.9344
0.7072 0.9388
0.7172 0.9416
0.7274 0.9452
0.7374 0.9468
0.7476 0.949
0.7576 0.9522
0.7678 0.9538
0.7778 0.956
0.788 0.9576
0.798 0.9612
0.8082 0.963
0.8182 0.9658
0.8284 0.969
0.8384 0.973
0.8486 0.9742
0.8586 0.9768
0.8688 0.9788
0.8788 0.9802
0.889 0.9826
0.899 0.9834
0.9092 0.986
0.9192 0.9884
0.9294 0.9914
0.9394 0.9924
0.9496 0.9948
0.9596 0.9956
0.9698 0.9978
0.9798 0.9986
0.99 0.9996
1 1
};
\addlegendentry{SGD T=2}
\addplot [semithick, forestgreen4416044, mark=square*, mark size=2, mark repeat=10, mark options={solid}]
table {%
0.0002 0.039
0.0102 0.2526
0.0204 0.3406
0.0304 0.39
0.0406 0.4358
0.0506 0.4692
0.0608 0.4948
0.0708 0.521
0.081 0.544
0.091 0.563
0.1012 0.5802
0.1112 0.6004
0.1214 0.6228
0.1314 0.6402
0.1416 0.6534
0.1516 0.6692
0.1618 0.6844
0.1718 0.697
0.182 0.71
0.192 0.7234
0.2022 0.7402
0.2122 0.7534
0.2224 0.762
0.2324 0.7678
0.2426 0.7756
0.2526 0.7844
0.2628 0.7932
0.2728 0.8028
0.283 0.8108
0.293 0.8196
0.3032 0.8254
0.3132 0.8356
0.3234 0.8402
0.3334 0.8472
0.3436 0.8508
0.3536 0.8582
0.3638 0.8616
0.3738 0.868
0.384 0.872
0.394 0.8782
0.4042 0.8828
0.4142 0.8876
0.4244 0.892
0.4344 0.8962
0.4446 0.9022
0.4546 0.905
0.4648 0.91
0.4748 0.913
0.485 0.9162
0.495 0.9208
0.5052 0.9236
0.5152 0.9272
0.5254 0.9298
0.5354 0.9332
0.5456 0.9352
0.5556 0.9382
0.5658 0.9414
0.5758 0.9446
0.586 0.9488
0.596 0.9518
0.6062 0.953
0.6162 0.956
0.6264 0.957
0.6364 0.9598
0.6466 0.9606
0.6566 0.9634
0.6668 0.9658
0.6768 0.9682
0.687 0.969
0.697 0.9704
0.7072 0.9714
0.7172 0.9732
0.7274 0.975
0.7374 0.9768
0.7476 0.9786
0.7576 0.9796
0.7678 0.981
0.7778 0.9834
0.788 0.9844
0.798 0.987
0.8082 0.9884
0.8182 0.9902
0.8284 0.991
0.8384 0.9916
0.8486 0.9934
0.8586 0.9944
0.8688 0.9952
0.8788 0.996
0.889 0.9962
0.899 0.9974
0.9092 0.9984
0.9192 0.9984
0.9294 0.9988
0.9394 0.9992
0.9496 0.9994
0.9596 1
0.9698 1
0.9798 1
0.99 1
1 1
};
\addlegendentry{SGD T=5}
\addplot [semithick, crimson2143940, mark=diamond*, mark size=2, mark repeat=10, mark options={solid}]
table {%
0.0002 0.0494
0.0102 0.3622
0.0204 0.4816
0.0304 0.5522
0.0406 0.6046
0.0506 0.6386
0.0608 0.6762
0.0708 0.7022
0.081 0.7278
0.091 0.746
0.1012 0.7654
0.1112 0.7804
0.1214 0.7998
0.1314 0.8146
0.1416 0.8286
0.1516 0.8416
0.1618 0.8574
0.1718 0.8674
0.182 0.8738
0.192 0.8808
0.2022 0.8858
0.2122 0.893
0.2224 0.9018
0.2324 0.9074
0.2426 0.9134
0.2526 0.9198
0.2628 0.928
0.2728 0.9312
0.283 0.9356
0.293 0.9382
0.3032 0.941
0.3132 0.9446
0.3234 0.9482
0.3334 0.9516
0.3436 0.954
0.3536 0.956
0.3638 0.9586
0.3738 0.961
0.384 0.963
0.394 0.9664
0.4042 0.969
0.4142 0.9702
0.4244 0.9724
0.4344 0.9738
0.4446 0.9752
0.4546 0.9766
0.4648 0.9778
0.4748 0.9796
0.485 0.9814
0.495 0.9822
0.5052 0.9836
0.5152 0.9852
0.5254 0.986
0.5354 0.9876
0.5456 0.989
0.5556 0.9896
0.5658 0.9904
0.5758 0.9912
0.586 0.9924
0.596 0.9932
0.6062 0.994
0.6162 0.994
0.6264 0.9946
0.6364 0.9952
0.6466 0.9964
0.6566 0.9968
0.6668 0.9974
0.6768 0.9976
0.687 0.9978
0.697 0.9982
0.7072 0.9982
0.7172 0.9982
0.7274 0.9984
0.7374 0.9986
0.7476 0.9986
0.7576 0.9988
0.7678 0.9992
0.7778 0.9994
0.788 0.9996
0.798 0.9996
0.8082 0.9996
0.8182 0.9996
0.8284 0.9996
0.8384 0.9996
0.8486 0.9996
0.8586 0.9996
0.8688 0.9996
0.8788 0.9998
0.889 0.9998
0.899 1
0.9092 1
0.9192 1
0.9294 1
0.9394 1
0.9496 1
0.9596 1
0.9698 1
0.9798 1
0.99 1
1 1
};
\addlegendentry{SGD T=10}
\addplot [semithick, mediumpurple148103189, mark=triangle*, mark size=2, mark repeat=10, mark options={solid,rotate=180}]
table {%
0.0002 0.1306
0.0102 0.7122
0.0204 0.8348
0.0304 0.8842
0.0406 0.9168
0.0506 0.9324
0.0608 0.9446
0.0708 0.9552
0.081 0.965
0.091 0.9732
0.1012 0.9788
0.1112 0.9814
0.1214 0.9846
0.1314 0.9864
0.1416 0.9882
0.1516 0.9898
0.1618 0.9914
0.1718 0.9926
0.182 0.9928
0.192 0.994
0.2022 0.9944
0.2122 0.9956
0.2224 0.9956
0.2324 0.9956
0.2426 0.996
0.2526 0.9964
0.2628 0.997
0.2728 0.9974
0.283 0.9976
0.293 0.9978
0.3032 0.9988
0.3132 0.999
0.3234 0.999
0.3334 0.9992
0.3436 0.9992
0.3536 0.9992
0.3638 0.9994
0.3738 0.9994
0.384 0.9996
0.394 0.9998
0.4042 0.9998
0.4142 0.9998
0.4244 0.9998
0.4344 0.9998
0.4446 1
0.4546 1
0.4648 1
0.4748 1
0.485 1
0.495 1
0.5052 1
0.5152 1
0.5254 1
0.5354 1
0.5456 1
0.5556 1
0.5658 1
0.5758 1
0.586 1
0.596 1
0.6062 1
0.6162 1
0.6264 1
0.6364 1
0.6466 1
0.6566 1
0.6668 1
0.6768 1
0.687 1
0.697 1
0.7072 1
0.7172 1
0.7274 1
0.7374 1
0.7476 1
0.7576 1
0.7678 1
0.7778 1
0.788 1
0.798 1
0.8082 1
0.8182 1
0.8284 1
0.8384 1
0.8486 1
0.8586 1
0.8688 1
0.8788 1
0.889 1
0.899 1
0.9092 1
0.9192 1
0.9294 1
0.9394 1
0.9496 1
0.9596 1
0.9698 1
0.9798 1
0.99 1
1 1
};
\addlegendentry{SGD T=25}
\addplot [semithick, sienna1408675, mark=triangle*, mark size=2, mark repeat=10, mark options={solid}]
table {%
0.0002 0.2868
0.0102 0.935
0.0204 0.9766
0.0304 0.9898
0.0406 0.994
0.0506 0.9968
0.0608 0.9978
0.0708 0.9986
0.081 0.999
0.091 0.9992
0.1012 0.9992
0.1112 0.9996
0.1214 0.9998
0.1314 0.9998
0.1416 1
0.1516 1
0.1618 1
0.1718 1
0.182 1
0.192 1
0.2022 1
0.2122 1
0.2224 1
0.2324 1
0.2426 1
0.2526 1
0.2628 1
0.2728 1
0.283 1
0.293 1
0.3032 1
0.3132 1
0.3234 1
0.3334 1
0.3436 1
0.3536 1
0.3638 1
0.3738 1
0.384 1
0.394 1
0.4042 1
0.4142 1
0.4244 1
0.4344 1
0.4446 1
0.4546 1
0.4648 1
0.4748 1
0.485 1
0.495 1
0.5052 1
0.5152 1
0.5254 1
0.5354 1
0.5456 1
0.5556 1
0.5658 1
0.5758 1
0.586 1
0.596 1
0.6062 1
0.6162 1
0.6264 1
0.6364 1
0.6466 1
0.6566 1
0.6668 1
0.6768 1
0.687 1
0.697 1
0.7072 1
0.7172 1
0.7274 1
0.7374 1
0.7476 1
0.7576 1
0.7678 1
0.7778 1
0.788 1
0.798 1
0.8082 1
0.8182 1
0.8284 1
0.8384 1
0.8486 1
0.8586 1
0.8688 1
0.8788 1
0.889 1
0.899 1
0.9092 1
0.9192 1
0.9294 1
0.9394 1
0.9496 1
0.9596 1
0.9698 1
0.9798 1
0.99 1
1 1
};
\addlegendentry{SGD T=50}
\end{axis}

\end{tikzpicture}
    \end{subfigure}
    \caption{Non-Gaussian data with $a=3$, $b=4$, $n=13$, $\nu=1$. For $\mathbf{A}$: $\rho_0 = 0.3+0.7j$, $\rho_1=0.3+0.5j$. For $\mathbf{B}$: $\rho_0 = 0.3+0.6j$, $\rho_1=0.4+0.5j$. 5000 trials.}
    \label{fig:ROC_nu1}
\end{figure}

\subsection{Performance in change detection with real data}

The SAR ITS data set is taken from UAVSAR (courtesy of NASA/JPL-Caltech) and is referenced as SDelta\_28518\_02, Segment 1. The number of images available is $T=68$. They show the evolution of a river delta in the USA with cycle of droughts and flood. Some strong scatterers appear and disappear over time. Since the image cover an extensive area of the delta, we crop it to an interesting part of size $200\times 200$ pixels. The SAR images correspond to full-polarization data with a resolution of 1.67 m in range and 0.6 m in azimuth. Thanks to this high resolution property of the SAR images, the scatterers present in this scene exhibit an interesting spectro-angular behavior, each polarization of these images has been subjected to the wavelet transform presented in [5], allowing to obtain images of dimension $p = 12$. In this particular configuration, the full covariance matrices show an inherent Kronecker structure $\mathbf{A} \otimes \mathbf{B}$. The matrix $\mathbf{A} \in \mathbb{C}^{a \times a}$ corresponds to the spectro-angular property ($a=4$) although $\mathbf{B} \in \mathbb{C}^{b \times b}$ is linked to the polarization ($b=3$).

As in the previous section, we test 4 change detectors, $\hat{\Lambda}_{SG}^{(T)}$ and $\hat{\Lambda}_{K-SG}^{(T)}$, and their corresponding on-line versions, $\hat{\Lambda}_{SG-O}^{(T)}$ and $\hat{\Lambda}_{K-SG-O}^{(T)}$. Figure \ref{fig:RealData_NoRepet} shows the outputs of the four detectors for the complete time series. Since we do not have any ground truths we are not able to conclude if the Kronecker versions are better than the classical ones. But we are interested in the similarity between the online and offline versions. For $T=68$, it is difficult to conclude since both results are quite different. We propose then to repeat the time series in order to have a larger one. Figure \ref{fig:RealData_Repet5} shows the result when the number of repetition is 5 ($T=340$). For this time series, we find better similarity between the online and corresponding offline versions, especially when the Kronecker structure is used in the change detector. This behavior is better illustrated in Figure \ref{fig:RealData_Repet10} when the number of repetitions is 10 ($T=680$).

\begin{figure*}[t]
    \centering
    \begin{subfigure}[b]{0.4\textwidth}
        % This file was created with tikzplotlib v0.10.1.
\begin{tikzpicture}

\definecolor{darkgray176}{RGB}{176,176,176}

\begin{axis}[
colorbar,
colorbar style={ylabel={}},
colormap={mymap}{[1pt]
  rgb(0pt)=(0,0,0.5);
  rgb(22pt)=(0,0,1);
  rgb(25pt)=(0,0,1);
  rgb(68pt)=(0,0.86,1);
  rgb(70pt)=(0,0.9,0.967741935483871);
  rgb(75pt)=(0.0806451612903226,1,0.887096774193548);
  rgb(128pt)=(0.935483870967742,1,0.0322580645161291);
  rgb(130pt)=(0.967741935483871,0.962962962962963,0);
  rgb(132pt)=(1,0.925925925925926,0);
  rgb(178pt)=(1,0.0740740740740741,0);
  rgb(182pt)=(0.909090909090909,0,0);
  rgb(200pt)=(0.5,0,0)
},
height=4cm,
point meta max=54698.8538975652,
point meta min=24422.8415353895,
tick align=outside,
tick pos=left,
width=\textwidth,
x grid style={darkgray176},
xmin=-0.5, xmax=193.5,
y dir=reverse,
y grid style={darkgray176},
ymin=-0.5, ymax=193.5,
ticks=none
]
\addplot graphics [includegraphics cmd=\pgfimage,xmin=-0.5, xmax=193.5, ymin=193.5, ymax=-0.5] {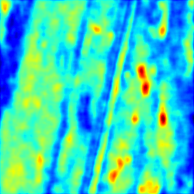};
\end{axis}

\end{tikzpicture}
        \caption{Scaled-Gaussian GLRT}
    \end{subfigure}
        \begin{subfigure}[b]{0.4\textwidth}
        % This file was created with tikzplotlib v0.10.1.
\begin{tikzpicture}

\definecolor{darkgray176}{RGB}{176,176,176}

\begin{axis}[
colorbar,
colorbar style={ylabel={}},
colormap={mymap}{[1pt]
  rgb(0pt)=(0,0,0.5);
  rgb(22pt)=(0,0,1);
  rgb(25pt)=(0,0,1);
  rgb(68pt)=(0,0.86,1);
  rgb(70pt)=(0,0.9,0.967741935483871);
  rgb(75pt)=(0.0806451612903226,1,0.887096774193548);
  rgb(128pt)=(0.935483870967742,1,0.0322580645161291);
  rgb(130pt)=(0.967741935483871,0.962962962962963,0);
  rgb(132pt)=(1,0.925925925925926,0);
  rgb(178pt)=(1,0.0740740740740741,0);
  rgb(182pt)=(0.909090909090909,0,0);
  rgb(200pt)=(0.5,0,0)
},
height=4cm,
point meta max=54698.8538975652,
point meta min=24422.8415353895,
tick align=outside,
tick pos=left,
width=\textwidth,
x grid style={darkgray176},
xmin=-0.5, xmax=193.5,
y dir=reverse,
y grid style={darkgray176},
ymin=-0.5, ymax=193.5,
ticks=none
]
\addplot graphics [includegraphics cmd=\pgfimage,xmin=-0.5, xmax=193.5, ymin=193.5, ymax=-0.5] {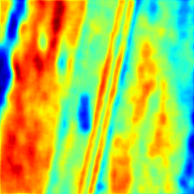};
\end{axis}

\end{tikzpicture}
        \caption{Scaled-Gaussian SGD}
    \end{subfigure}
    \begin{subfigure}[b]{0.4\textwidth}
        % This file was created with tikzplotlib v0.10.1.
\begin{tikzpicture}

\definecolor{darkgray176}{RGB}{176,176,176}

\begin{axis}[
colorbar,
colorbar style={ylabel={}},
colormap={mymap}{[1pt]
  rgb(0pt)=(0,0,0.5);
  rgb(22pt)=(0,0,1);
  rgb(25pt)=(0,0,1);
  rgb(68pt)=(0,0.86,1);
  rgb(70pt)=(0,0.9,0.967741935483871);
  rgb(75pt)=(0.0806451612903226,1,0.887096774193548);
  rgb(128pt)=(0.935483870967742,1,0.0322580645161291);
  rgb(130pt)=(0.967741935483871,0.962962962962963,0);
  rgb(132pt)=(1,0.925925925925926,0);
  rgb(178pt)=(1,0.0740740740740741,0);
  rgb(182pt)=(0.909090909090909,0,0);
  rgb(200pt)=(0.5,0,0)
},
height=4cm,
point meta max=54698.8538975652,
point meta min=24422.8415353895,
tick align=outside,
tick pos=left,
width=\textwidth,
x grid style={darkgray176},
xmin=-0.5, xmax=193.5,
y dir=reverse,
y grid style={darkgray176},
ymin=-0.5, ymax=193.5,
ticks=none
]
\addplot graphics [includegraphics cmd=\pgfimage,xmin=-0.5, xmax=193.5, ymin=193.5, ymax=-0.5] {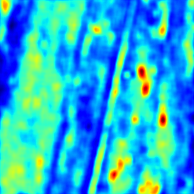};
\end{axis}

\end{tikzpicture}
        \caption{Scaled-Gaussian Kronecker GLRT}
    \end{subfigure}
    \begin{subfigure}[b]{0.4\textwidth}
        % This file was created with tikzplotlib v0.10.1.
\begin{tikzpicture}

\definecolor{darkgray176}{RGB}{176,176,176}

\begin{axis}[
colorbar,
colorbar style={ylabel={}},
colormap={mymap}{[1pt]
  rgb(0pt)=(0,0,0.5);
  rgb(22pt)=(0,0,1);
  rgb(25pt)=(0,0,1);
  rgb(68pt)=(0,0.86,1);
  rgb(70pt)=(0,0.9,0.967741935483871);
  rgb(75pt)=(0.0806451612903226,1,0.887096774193548);
  rgb(128pt)=(0.935483870967742,1,0.0322580645161291);
  rgb(130pt)=(0.967741935483871,0.962962962962963,0);
  rgb(132pt)=(1,0.925925925925926,0);
  rgb(178pt)=(1,0.0740740740740741,0);
  rgb(182pt)=(0.909090909090909,0,0);
  rgb(200pt)=(0.5,0,0)
},
height=4cm,
point meta max=54698.8538975652,
point meta min=24422.8415353895,
tick align=outside,
tick pos=left,
width=\textwidth,
x grid style={darkgray176},
xmin=-0.5, xmax=193.5,
y dir=reverse,
y grid style={darkgray176},
ymin=-0.5, ymax=193.5,
ticks=none
]
\addplot graphics [includegraphics cmd=\pgfimage,xmin=-0.5, xmax=193.5, ymin=193.5, ymax=-0.5] {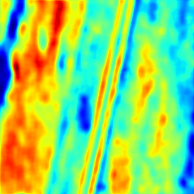};
\end{axis}

\end{tikzpicture}
        \caption{Scaled-Gaussian Kronecker SGD}
    \end{subfigure}
    \caption{Window size of $7\times 7$. No repetitions of time series: $T=68$.}
    \label{fig:RealData_NoRepet}
\end{figure*}

\begin{figure*}[t]
    \centering
    \begin{subfigure}[b]{0.4\textwidth}
        % This file was created with tikzplotlib v0.10.1.
\begin{tikzpicture}

\definecolor{darkgray176}{RGB}{176,176,176}

\begin{axis}[
colorbar,
colorbar style={ylabel={}},
colormap={mymap}{[1pt]
  rgb(0pt)=(0,0,0.5);
  rgb(22pt)=(0,0,1);
  rgb(25pt)=(0,0,1);
  rgb(68pt)=(0,0.86,1);
  rgb(70pt)=(0,0.9,0.967741935483871);
  rgb(75pt)=(0.0806451612903226,1,0.887096774193548);
  rgb(128pt)=(0.935483870967742,1,0.0322580645161291);
  rgb(130pt)=(0.967741935483871,0.962962962962963,0);
  rgb(132pt)=(1,0.925925925925926,0);
  rgb(178pt)=(1,0.0740740740740741,0);
  rgb(182pt)=(0.909090909090909,0,0);
  rgb(200pt)=(0.5,0,0)
},
height=4cm,
point meta max=54698.8538975652,
point meta min=24422.8415353895,
tick align=outside,
tick pos=left,
width=\textwidth,
x grid style={darkgray176},
xmin=-0.5, xmax=193.5,
y dir=reverse,
y grid style={darkgray176},
ymin=-0.5, ymax=193.5,
ticks=none
]
\addplot graphics [includegraphics cmd=\pgfimage,xmin=-0.5, xmax=193.5, ymin=193.5, ymax=-0.5] {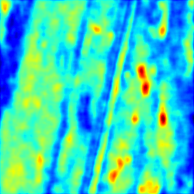};
\end{axis}

\end{tikzpicture}
        \caption{Scaled-Gaussian GLRT}
    \end{subfigure}
        \begin{subfigure}[b]{0.4\textwidth}
        % This file was created with tikzplotlib v0.10.1.
\begin{tikzpicture}

\definecolor{darkgray176}{RGB}{176,176,176}

\begin{axis}[
colorbar,
colorbar style={ylabel={}},
colormap={mymap}{[1pt]
  rgb(0pt)=(0,0,0.5);
  rgb(22pt)=(0,0,1);
  rgb(25pt)=(0,0,1);
  rgb(68pt)=(0,0.86,1);
  rgb(70pt)=(0,0.9,0.967741935483871);
  rgb(75pt)=(0.0806451612903226,1,0.887096774193548);
  rgb(128pt)=(0.935483870967742,1,0.0322580645161291);
  rgb(130pt)=(0.967741935483871,0.962962962962963,0);
  rgb(132pt)=(1,0.925925925925926,0);
  rgb(178pt)=(1,0.0740740740740741,0);
  rgb(182pt)=(0.909090909090909,0,0);
  rgb(200pt)=(0.5,0,0)
},
height=4cm,
point meta max=54698.8538975652,
point meta min=24422.8415353895,
tick align=outside,
tick pos=left,
width=\textwidth,
x grid style={darkgray176},
xmin=-0.5, xmax=193.5,
y dir=reverse,
y grid style={darkgray176},
ymin=-0.5, ymax=193.5,
ticks=none
]
\addplot graphics [includegraphics cmd=\pgfimage,xmin=-0.5, xmax=193.5, ymin=193.5, ymax=-0.5] {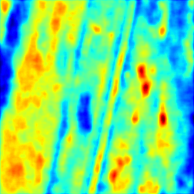};
\end{axis}

\end{tikzpicture}
        \caption{Scaled-Gaussian SGD}
    \end{subfigure}
    \begin{subfigure}[b]{0.4\textwidth}
        % This file was created with tikzplotlib v0.10.1.
\begin{tikzpicture}

\definecolor{darkgray176}{RGB}{176,176,176}

\begin{axis}[
colorbar,
colorbar style={ylabel={}},
colormap={mymap}{[1pt]
  rgb(0pt)=(0,0,0.5);
  rgb(22pt)=(0,0,1);
  rgb(25pt)=(0,0,1);
  rgb(68pt)=(0,0.86,1);
  rgb(70pt)=(0,0.9,0.967741935483871);
  rgb(75pt)=(0.0806451612903226,1,0.887096774193548);
  rgb(128pt)=(0.935483870967742,1,0.0322580645161291);
  rgb(130pt)=(0.967741935483871,0.962962962962963,0);
  rgb(132pt)=(1,0.925925925925926,0);
  rgb(178pt)=(1,0.0740740740740741,0);
  rgb(182pt)=(0.909090909090909,0,0);
  rgb(200pt)=(0.5,0,0)
},
height=4cm,
point meta max=54698.8538975652,
point meta min=24422.8415353895,
tick align=outside,
tick pos=left,
width=\textwidth,
x grid style={darkgray176},
xmin=-0.5, xmax=193.5,
y dir=reverse,
y grid style={darkgray176},
ymin=-0.5, ymax=193.5,
ticks=none
]
\addplot graphics [includegraphics cmd=\pgfimage,xmin=-0.5, xmax=193.5, ymin=193.5, ymax=-0.5] {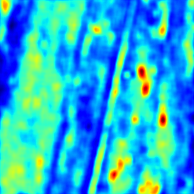};
\end{axis}

\end{tikzpicture}
        \caption{Scaled-Gaussian Kronecker GLRT}
    \end{subfigure}
    \begin{subfigure}[b]{0.4\textwidth}
        % This file was created with tikzplotlib v0.10.1.
\begin{tikzpicture}

\definecolor{darkgray176}{RGB}{176,176,176}

\begin{axis}[
colorbar,
colorbar style={ylabel={}},
colormap={mymap}{[1pt]
  rgb(0pt)=(0,0,0.5);
  rgb(22pt)=(0,0,1);
  rgb(25pt)=(0,0,1);
  rgb(68pt)=(0,0.86,1);
  rgb(70pt)=(0,0.9,0.967741935483871);
  rgb(75pt)=(0.0806451612903226,1,0.887096774193548);
  rgb(128pt)=(0.935483870967742,1,0.0322580645161291);
  rgb(130pt)=(0.967741935483871,0.962962962962963,0);
  rgb(132pt)=(1,0.925925925925926,0);
  rgb(178pt)=(1,0.0740740740740741,0);
  rgb(182pt)=(0.909090909090909,0,0);
  rgb(200pt)=(0.5,0,0)
},
height=4cm,
point meta max=54698.8538975652,
point meta min=24422.8415353895,
tick align=outside,
tick pos=left,
width=\textwidth,
x grid style={darkgray176},
xmin=-0.5, xmax=193.5,
y dir=reverse,
y grid style={darkgray176},
ymin=-0.5, ymax=193.5,
ticks=none
]
\addplot graphics [includegraphics cmd=\pgfimage,xmin=-0.5, xmax=193.5, ymin=193.5, ymax=-0.5] {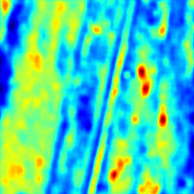};
\end{axis}

\end{tikzpicture}
        \caption{Scaled-Gaussian Kronecker SGD}
    \end{subfigure}
    \caption{Window size of $7\times 7$. $5$ repetitions of time series: $T=340$.}
    \label{fig:RealData_Repet5}
\end{figure*}

\begin{figure*}[t]
    \centering
    \begin{subfigure}[b]{0.4\textwidth}
        % This file was created with tikzplotlib v0.10.1.
\begin{tikzpicture}

\definecolor{darkgray176}{RGB}{176,176,176}

\begin{axis}[
colorbar,
colorbar style={ylabel={}},
colormap={mymap}{[1pt]
  rgb(0pt)=(0,0,0.5);
  rgb(22pt)=(0,0,1);
  rgb(25pt)=(0,0,1);
  rgb(68pt)=(0,0.86,1);
  rgb(70pt)=(0,0.9,0.967741935483871);
  rgb(75pt)=(0.0806451612903226,1,0.887096774193548);
  rgb(128pt)=(0.935483870967742,1,0.0322580645161291);
  rgb(130pt)=(0.967741935483871,0.962962962962963,0);
  rgb(132pt)=(1,0.925925925925926,0);
  rgb(178pt)=(1,0.0740740740740741,0);
  rgb(182pt)=(0.909090909090909,0,0);
  rgb(200pt)=(0.5,0,0)
},
height=4cm,
point meta max=54698.8538975652,
point meta min=24422.8415353895,
tick align=outside,
tick pos=left,
width=\textwidth,
x grid style={darkgray176},
xmin=-0.5, xmax=193.5,
y dir=reverse,
y grid style={darkgray176},
ymin=-0.5, ymax=193.5,
ticks=none
]
\addplot graphics [includegraphics cmd=\pgfimage,xmin=-0.5, xmax=193.5, ymin=193.5, ymax=-0.5] {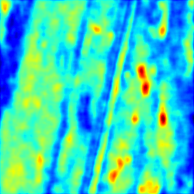};
\end{axis}

\end{tikzpicture}
        \caption{Scaled-Gaussian GLRT}
    \end{subfigure}
        \begin{subfigure}[b]{0.4\textwidth}
        % This file was created with tikzplotlib v0.10.1.
\begin{tikzpicture}

\definecolor{darkgray176}{RGB}{176,176,176}

\begin{axis}[
colorbar,
colorbar style={ylabel={}},
colormap={mymap}{[1pt]
  rgb(0pt)=(0,0,0.5);
  rgb(22pt)=(0,0,1);
  rgb(25pt)=(0,0,1);
  rgb(68pt)=(0,0.86,1);
  rgb(70pt)=(0,0.9,0.967741935483871);
  rgb(75pt)=(0.0806451612903226,1,0.887096774193548);
  rgb(128pt)=(0.935483870967742,1,0.0322580645161291);
  rgb(130pt)=(0.967741935483871,0.962962962962963,0);
  rgb(132pt)=(1,0.925925925925926,0);
  rgb(178pt)=(1,0.0740740740740741,0);
  rgb(182pt)=(0.909090909090909,0,0);
  rgb(200pt)=(0.5,0,0)
},
height=4cm,
point meta max=54698.8538975652,
point meta min=24422.8415353895,
tick align=outside,
tick pos=left,
width=\textwidth,
x grid style={darkgray176},
xmin=-0.5, xmax=193.5,
y dir=reverse,
y grid style={darkgray176},
ymin=-0.5, ymax=193.5,
ticks=none
]
\addplot graphics [includegraphics cmd=\pgfimage,xmin=-0.5, xmax=193.5, ymin=193.5, ymax=-0.5] {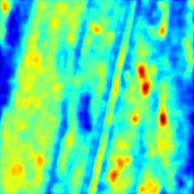};
\end{axis}

\end{tikzpicture}
        \caption{Scaled-Gaussian SGD}
    \end{subfigure}
    \begin{subfigure}[b]{0.4\textwidth}
        % This file was created with tikzplotlib v0.10.1.
\begin{tikzpicture}

\definecolor{darkgray176}{RGB}{176,176,176}

\begin{axis}[
colorbar,
colorbar style={ylabel={}},
colormap={mymap}{[1pt]
  rgb(0pt)=(0,0,0.5);
  rgb(22pt)=(0,0,1);
  rgb(25pt)=(0,0,1);
  rgb(68pt)=(0,0.86,1);
  rgb(70pt)=(0,0.9,0.967741935483871);
  rgb(75pt)=(0.0806451612903226,1,0.887096774193548);
  rgb(128pt)=(0.935483870967742,1,0.0322580645161291);
  rgb(130pt)=(0.967741935483871,0.962962962962963,0);
  rgb(132pt)=(1,0.925925925925926,0);
  rgb(178pt)=(1,0.0740740740740741,0);
  rgb(182pt)=(0.909090909090909,0,0);
  rgb(200pt)=(0.5,0,0)
},
height=4cm,
point meta max=54698.8538975652,
point meta min=24422.8415353895,
tick align=outside,
tick pos=left,
width=\textwidth,
x grid style={darkgray176},
xmin=-0.5, xmax=193.5,
y dir=reverse,
y grid style={darkgray176},
ymin=-0.5, ymax=193.5,
ticks=none
]
\addplot graphics [includegraphics cmd=\pgfimage,xmin=-0.5, xmax=193.5, ymin=193.5, ymax=-0.5] {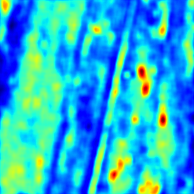};
\end{axis}

\end{tikzpicture}
        \caption{Scaled-Gaussian Kronecker GLRT}
    \end{subfigure}
    \begin{subfigure}[b]{0.4\textwidth}
        % This file was created with tikzplotlib v0.10.1.
\begin{tikzpicture}

\definecolor{darkgray176}{RGB}{176,176,176}

\begin{axis}[
colorbar,
colorbar style={ylabel={}},
colormap={mymap}{[1pt]
  rgb(0pt)=(0,0,0.5);
  rgb(22pt)=(0,0,1);
  rgb(25pt)=(0,0,1);
  rgb(68pt)=(0,0.86,1);
  rgb(70pt)=(0,0.9,0.967741935483871);
  rgb(75pt)=(0.0806451612903226,1,0.887096774193548);
  rgb(128pt)=(0.935483870967742,1,0.0322580645161291);
  rgb(130pt)=(0.967741935483871,0.962962962962963,0);
  rgb(132pt)=(1,0.925925925925926,0);
  rgb(178pt)=(1,0.0740740740740741,0);
  rgb(182pt)=(0.909090909090909,0,0);
  rgb(200pt)=(0.5,0,0)
},
height=4cm,
point meta max=54698.8538975652,
point meta min=24422.8415353895,
tick align=outside,
tick pos=left,
width=\textwidth,
x grid style={darkgray176},
xmin=-0.5, xmax=193.5,
y dir=reverse,
y grid style={darkgray176},
ymin=-0.5, ymax=193.5,
ticks=none
]
\addplot graphics [includegraphics cmd=\pgfimage,xmin=-0.5, xmax=193.5, ymin=193.5, ymax=-0.5] {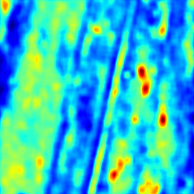};
\end{axis}

\end{tikzpicture}
        \caption{Scaled-Gaussian Kronecker SGD}
    \end{subfigure}
    \caption{Window size of $7\times 7$. $10$ repetitions of time series: $T=680$.}
    \label{fig:RealData_Repet10}
\end{figure*}

\section{Conclusion}

In this paper, we developed an online version of a change detection algorithm that tests the difference in a series of a structured Kronecker covariance. The data are assumed to follow a SG distribution. This model is well adapted to SAR-MITS data. The development of the online version was based on the information geometry induced by this structure and the statistical assumption. The algorithm was mainly based on the online estimation of the covariance matrix under the null hypothesis. This step was done with a Riemannian gradient and we also provided the corresponding ICRB. Results on both simulated and real data showed the good results of this approach and its practical interest to analyze large image times series. 

%\textcolor{red}{+ ouverture possible : on a focus sur une appli, mais le online sgd ça pourrait serviv ailleurs, comme du filtrage récursif en MIMO ?}

%\appendix
%
%\section{Proofs}
%  \label{sec:proofs}
  %\input{sections/AI_proofs.tex}
  
\bibliographystyle{elsarticle-num}
\bibliography{references}

\end{document}